\theoremstyle{definition}
\newtheorem{theorem}{Theorem}
\newtheorem{lemma}[theorem]{Lemma}
\newtheorem{corollary}[theorem]{Corollary}
\renewcommand{\braket}[2]{\langle #1\vert #2\rangle}
\definecolor{darkred}{rgb}{.8 .1 .1}
\definecolor{darkgreen}{rgb}{.1 .8 .1}
\definecolor{darkyellow}{rgb}{.6 .6 .0}
\newcommand{\GHZ}{\text{GHZ}}
\newcommand{\ketbra}[1]{| #1\rangle \langle #1|}
\newcommand{\be}{\begin{equation}}
\newcommand{\ee}{\end{equation}}
\newcommand{\bea}{\begin{eqnarray}}
\newcommand{\eea}{\end{eqnarray}}
\newcommand{\SU}{\text{SU}}
\newcommand{\kommentar}[1]{}
\newcommand{\forget}[1]{}
\newcommand{\rk}{\text{rk}}
\newtheorem{observation}{Observation}
\begin{document}

\title{
Finding maximal quantum resources
} 
\author{Jonathan Steinberg}
\email{steinberg@physik.uni-siegen.de}
\affiliation{Naturwissenschaftlich-Technische Fakultät, 
Universität Siegen, Walter-Flex-Straße 3, 57068 Siegen, Germany}
\affiliation{State Key Laboratory for Mesoscopic Physics, School of Physics and Frontiers Science Center for Nano-Optoelectronics, Peking University, Beijing 100871, China}

\author{Otfried Gühne}
\email{otfried.guehne@uni-siegen.de}
\affiliation{Naturwissenschaftlich-Technische Fakultät, 
Universität Siegen, Walter-Flex-Straße 3, 57068 Siegen, Germany}

\date{\today}

\begin{abstract}
For many applications the presence of a quantum advantage  crucially depends on the availability of resourceful states. 
Although the resource typically depends on the particular task, in the context of multipartite systems entangled quantum states are often regarded as resourceful. We propose an algorithmic method to find  
maximally resourceful states of several particles for various applications and quantifiers. We discuss in detail the case of the geometric measure, identifying physically interesting states and also deliver insights to the problem of 
absolutely maximally entangled states. Moreover, we demonstrate the universality of our approach by applying it to maximally entangled subspaces, the Schmidt-rank, the stabilizer rank as well as the preparability in triangle networks. 
\end{abstract}
\maketitle


\section{Introduction}
The access to multipartite quantum states is an indispensable prerequisite for many applications in quantum information, turning them into a powerful resource which potentially outperform their classical counterparts~\cite{BB84, Ekert91, Metrology_outperform_1993}. Indeed, magic states turn out to be a resource for fault-tolerant quantum computation~\cite{logic_gate_construction_chuang_2000,universal_with_cliffords_bravyi_2005} while cluster states are resourceful for measurement-based quantum computation~\cite{one_way_qc_raussendorf_2001,measurement_based_cluster_briegel_2003}. Furthermore, the power of quantum metrology heavily relies on the ability to prepare multipartite quantum states. However, for a particular given task it is in general very challenging to identify those multipartite states which yield the largest advantage.

For many important applications entanglement has been proven to be a powerful resource.  An example of resourceful states are the absolutely maximally entangled (AME) states which maximized the entanglement in the bipartitions, but are notoriously difficult to characterize~\cite{scott2004, max_multi_entangles_2008, exploring_entanglement_2018, gour2010, huber2017, bounds_shadow_huber_2018,utzekatze_2022}. 
Still, the analysis of AME states is important for understanding quantum error correction and regarded as one of the central problems in the field~\cite{5open_question_horodecki_2022,rather2021}.
However, multiparticle entanglement offers a complex and rich structure resulting in the impossibility of quantification by means of a single number. 
Consequently, there is  a variety  of quantifiers, each emphasizing a different property that makes a state a valuable resource~\cite{vedral1998, vidal1998, giraud2010}.

\begin{figure}[t]
    \centering
   \includegraphics[width=0.85\columnwidth]{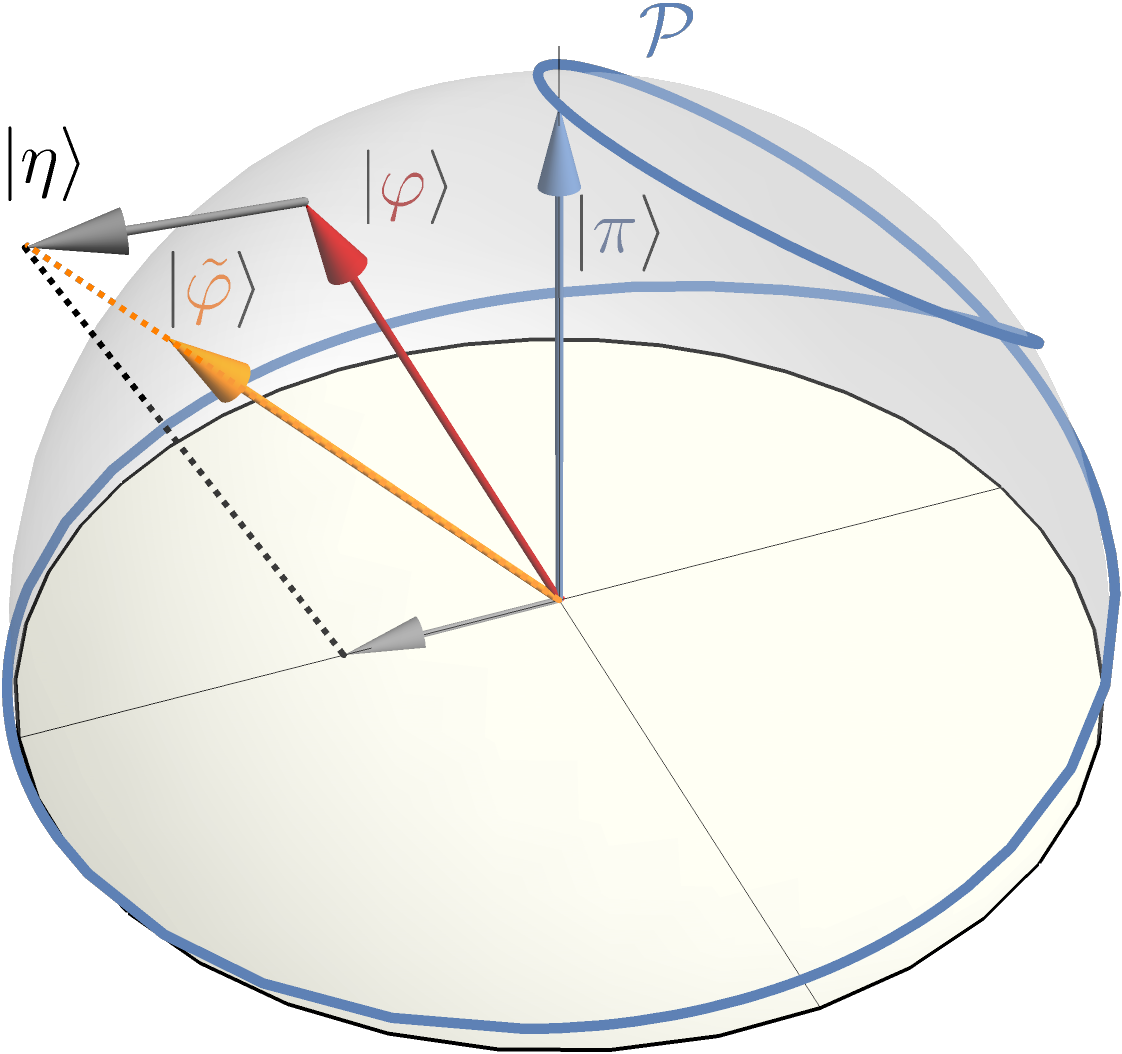}
    \caption{Schematic illustration of the iteration step of the algorithm. The set of all states is represented by the half sphere and the set of product states by the lower dimensional manifold $\mathcal{P}$. If the algorithm is initialized in state $\vert \varphi \rangle$ (red arrow), we first compute the best approximation within $\mathcal{P}$, denoted by $\vert \pi \rangle$ (blue arrow). Then, we compute the projector into the orthocomplement of $\vert \pi \rangle$, which is here given by the $xy$-plane. The portion of $\vert \varphi \rangle$ within the $xy$-plane is given by $\vert \eta \rangle$ (gray arrow). The new state $\vert \tilde{\varphi}\rangle \sim \vert \varphi \rangle + \epsilon \vert \eta \rangle$ is then the normalized version of $\vert \varphi \rangle$ shifted by a small amount $\epsilon>0$ into the direction $\vert \eta \rangle$.}
    \label{fig:convergence}
\end{figure}

The {geometric measure} of
entanglement~\cite{geo_measure_shimony_1995,monotones_barnum_2001,Wei2003,brukner_zukowski_2008}
quantifies the proximity of a quantum state to the set 
of product states, has an intuitive meaning and also offers 
multiple operational interpretations. For instance, it relates to multipartite 
state discrimination using LOCC~\cite{Hayashi2006}, the additivity of channel
capacities~\cite{channel_capacity_2010}, quantum state
estimation~\cite{state_estimation_wei_2011} and was also used to describe quantum phase transitions~\cite{orus_phase_transition_2010,
topology_geometric_measure_2014, lattice_phase_transition_2016, 
cluster_xy_model_2019}. 
In complexity theory, 
identifying maximally entangled states and computing their geometric measure
allows for the identification of cases where the $\text{MAX-}N$-local 
Hamiltonian problem and its product state approximation deviate
maximally~\cite{gharibian2012, tensor_norms_information_montanaro}. 
So, although high  geometric entanglement does not guarantee that a quantum state 
is useful for all tasks \cite{one_way_qcomputing_2007,Bremner2009,gross2009},
finding maximally entangled states has been recognised as a natural 
and important problem~\cite{tensor_norms_information_montanaro}.
So far, however, maximally entangled states have only been identified 
within the low-dimensional family of symmetric qubit states, where their 
computation is related to the problem of distributing charges on 
the unit sphere~\cite{aulbach2010,giraud2010,symmetric_maximizer_braun_2010}, or within the 
family of graph states that stem from bipartite 
graphs~\cite{direct_evaluation_graph_states_2013}.

Mathematically, the complexity of the task reflects the fact that 
pure multiparticle states are described by tensors. In contrast 
to the matrix case, notions like ranks and eigenvalues are for 
tensors much less understood and their computation turns out 
to be a hard problem~\cite{entanglement_tensor_rank_2008, 
tensor_problems_hard_2013}. Interestingly, the geometric measure 
is closely related to the recently introduced concept of tensor 
eigenvalues~\cite{Wei2003, tensor_eigenvalues_qi_2005, Chen2010, 
measure_tensor_u_eigenvalues_2014, Hu2016}, offering a much more complex structure as the matrix case~\cite{eigenstructure_raasch_2023} as well as to the 
notion of injective tensor norms~\cite{tensor_norm_additivity_2002, aubrun2017} 
and matrix permanents~\cite{wei2010}. Here, maximally entangled states offer 
maximal tensor eigenvalues~\cite{qi2018} and it was conjectured that the 
overlap of a multipartite qubit state with the set of product states 
decreases exponentially in the number of particles~\cite{aubrun2017}. 
So, the identification of maximally entangled states provides 
valuable intuition to decide this conjecture.

In this paper we design an iterative method for finding maximally resourceful multipartite quantum states. Choosing initially 
a generic quantum state, we show that in each step of the algorithm the resourcefulness increases. We illustrate the universality of our method by applying it to various different resource quantifiers and present a detailed analysis for the geometric measure. Here we 
identify for moderate sizes the 
corresponding states, revealing an interesting connection to AME states. 
Further, we introduce a novel quantifier for maximally entangled subspaces for which we provide a full 
characterization for the case of three qubits.


\section{The geometric measure}
This measure quantifies how well a given multiparticle quantum state 
can be approximated by pure product states. More formally~\cite{geo_measure_shimony_1995, monotones_barnum_2001, Wei2003, brukner_zukowski_2008}, given 
$\vert \varphi \rangle$ one defines 
$G(\vert \varphi \rangle ) = 1- \lambda^{2}(\vert \varphi \rangle )$ 
with 
\begin{align}
 \lambda^{2}(\vert \varphi \rangle ) = \underset{\vert \pi \rangle }{\text{max}} \, \vert \langle \pi \vert \varphi \rangle \vert^{2},
 \label{eq-geodef}
\end{align}
where the maximization runs over all product states $\vert \pi \rangle$ of 
the corresponding system. This quantity for pure states can be extended to
mixed states via the convex roof construction and is then a proper entanglement
monotone \cite{Wei2003}.

While computing $\lambda^{2}$ for a generic pure state is, in principle, difficult 
\cite{hardness_approximation_2008}, there is a simple see-saw iteration that can
be used \cite{est_ent_measures_2007, Streltsov2011, 
numerical_constructions_sperling_2018}. For a three-partite state $\vert \varphi 
\rangle$, the algorithm starts with a random product state $\vert a_{0} b_{0} c_{0} 
\rangle$. From this we can compute the non-normalized state $\vert \tilde{a} \rangle 
= \langle b_{0} c_{0} \vert \varphi \rangle$, and make the update $\vert a_{0} 
\rangle \mapsto \vert a_{1} \rangle = \vert \tilde{a} \rangle /\sqrt{\langle 
\tilde{a} \vert \tilde{a} 
\rangle}$. The procedure is repeated for the second qubit 
$\vert b_{0} \rangle$, starting in the product state $\vert a_{1} b_{0} c_{0} 
\rangle$. This is then iterated until one reaches a fixed point. Of course, 
this fixed point is not guaranteed to be the global optimum, in practice, however, 
this method works very well.


\section{Idea of the algorithm}  We present the algorithm for the case of 
three qubits. The generalization to arbitrary multiparticle systems is 
straightforward and is discussed in Appendix \ref{appendix-a}. As initial 
state $\vert \varphi \rangle$ we choose a random pure three qubit state. 
Then, we compute its closest product state $\vert \pi \rangle$ via the 
see-saw algorithm described above. We can assume without loss of generality 
that $\vert \pi\rangle = \vert 000 \rangle$. We write 
$\lambda =  \vert \langle \varphi \vert \pi \rangle \vert $ for 
the maximal overlap of $\vert \varphi \rangle$ with the set of all 
product states. Note that for a generic quantum state the closest 
product state is unique. 

The key idea is now to perturb the state $\ket{\varphi}$ in a way that 
the overlap with $\ket{\pi}$ decreases. If $\ket{\pi}$ is the unique
closest product state and the perturbation is small, one can then expect
that the overlap with {\it all} product states decreases. So, we consider 
the orthocomplement of $\vert \pi \rangle = \vert 000 \rangle$, that is, 
the complex subspace spanned by $ \vert 001 \rangle , \vert 010 \rangle, 
\vert 100 \rangle, \vert 011 \rangle , \vert 101 \rangle , \vert 110 \rangle , 
\vert 111 \rangle$. This subspace gives rise to a projection operator 
$\Pi = \openone - \ketbra{\pi}$ and we compute the best approximation of 
the state $\vert \varphi \rangle $ within this subspace, given by
$\vert \eta \rangle = \Pi \vert \varphi \rangle/\mathcal{M}$, where 
$\mathcal{M}$ denotes the normalization. Then, we 
shift the state $\ket{\varphi}$  in the direction of $\ket{\eta}$ by 
some small amount $\theta >0$. Hence the state update rule is given by 
\begin{align}
\label{eq:simpleupdate}
  \vert \varphi \rangle \mapsto \vert \Tilde{\varphi} \rangle  :=  \frac{1}{\mathcal{N}} ( \vert \varphi \rangle + \theta \vert \eta \rangle )
\end{align}
where $\mathcal{N}$ is a normalization factor. 

In the next step, we  calculate the best rank one approximation to 
$\vert \Tilde{\varphi} \rangle$. This process is iterated until the geometric 
measure is not increasing under the update rule (\ref{eq:simpleupdate}). In 
this case, one can reduce the step size or the algorithm terminates. 

One can directly check that the overlap with $\vert \pi \rangle$ is smaller 
for $\vert \Tilde{\varphi} \rangle$ than for $\vert \varphi \rangle$. Indeed 
one has 
\begin{align} 
\label{eq:crit}
    \vert \langle \pi \vert \Tilde{\varphi} \rangle \vert^{2} = \frac{1}{\mathcal{N}^{2}} \vert  \langle \pi \vert \varphi \rangle + \theta \langle \pi \vert \eta \rangle \vert^{2} = \frac{ \lambda^{2}}{\mathcal
    {N}^{2}} <  \lambda^{2}
\end{align}
since $\mathcal{N} >1$ if $\theta >0$. In fact, a much stronger statement holds:

\begin{observation}\label{thm:mainclaim}
For a generic quantum state $\vert \psi \rangle$ 
there always exists a $\Theta > 0$ such that the updated state 
$\vert \tilde{\psi} \rangle$ according to Eq.~\eqref{eq:simpleupdate} 
with step size $\theta<\Theta$ fulfills 
$G(\vert \psi \rangle) < G(\vert \tilde{\psi} \rangle )$. 
\end{observation}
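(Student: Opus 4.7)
The plan is to work with the decomposition $|\psi\rangle = \lambda|\pi\rangle + \mu|\eta\rangle$, where $\mu = \sqrt{1-\lambda^{2}}$ is nonzero for any non-product (hence generic) state. The update becomes $|\tilde{\psi}\rangle = (\lambda|\pi\rangle + (\mu+\theta)|\eta\rangle)/\mathcal{N}$ with $\mathcal{N}^{2} = \lambda^{2} + (\mu+\theta)^{2}$, so for any product state $|\sigma\rangle$ the overlap reads $f_{\theta}(|\sigma\rangle) := |\langle\sigma|\tilde{\psi}\rangle|^{2} = |\lambda\langle\sigma|\pi\rangle + (\mu+\theta)\langle\sigma|\eta\rangle|^{2}/\mathcal{N}^{2}$. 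Since $G(|\tilde{\psi}\rangle) = 1 - \max_{\sigma}f_{\theta}$, the claim reduces to showing that $\max_{|\sigma\rangle \in P}f_{\theta} < \lambda^{2}$ for all sufficiently small $\theta > 0$, where $P$ is the compact manifold of product states.

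The strategy is to split $P = U \cup (P\setminus U)$ with $U$ a small neighborhood of $|\pi\rangle$, and handle the two regions separately. Genericity supplies two ingredients: (i) $|\pi\rangle$ is the \emph{unique} maximizer of $f_{0}$, which is explicit in the statement; and (ii) the Hessian of $f_{0}$ restricted to $P$ is strictly negative definite at $|\pi\rangle$. Property (ii) holds on an open dense subset of pure states (the degenerate locus is a proper algebraic subvariety). On $P\setminus U$, (i) together with compactness gives $\max f_{0} \le \lambda^{2} - \delta$ for some $\delta > 0$, and joint continuity of $f_{\theta}$ in $(\theta,|\sigma\rangle)$ implies $f_{\theta}\to f_{0}$ uniformly as $\theta\to 0$, so that $f_{\theta} < \lambda^{2} - \delta/2$ there for $\theta$ small.

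For the local region $U$, I would parametrize product states near $|\pi\rangle = |0\cdots 0\rangle$ as $|\sigma\rangle \propto \bigotimes_{i}(|0\rangle + \epsilon_{i}|1\rangle)$. The critical-point conditions at $|\pi\rangle$ kill all first-order terms in $\epsilon$ in the expansion of $f_{0}$, and (ii) then yields the quadratic upper bound $f_{0}(|\sigma\rangle) \le \lambda^{2} - c\|\epsilon\|^{2}$ for $\|\epsilon\|$ small. A direct computation gives $\partial_{\theta} f_{\theta}|_{\theta=0}(|\pi\rangle) = -2\mu\lambda^{2} < 0$, which by continuity stays below some negative constant $-c'$ on a sufficiently small $U$. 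Combining, $f_{\theta}(|\sigma\rangle) \le \lambda^{2} - c\|\epsilon\|^{2} - c'\theta + O(\theta^{2})$ uniformly on $U$, which is strictly less than $\lambda^{2}$ once $\theta$ is small enough. Taking $\Theta$ as the minimum of the thresholds produced in $U$ and $P\setminus U$ concludes the argument.

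The main obstacle I expect is the joint control of the two small parameters $\theta$ and $\|\epsilon\|$ inside $U$: one has to make sure the linear-in-$\theta$ decrease coming from pushing off $|\pi\rangle$ is not absorbed by the $O(\theta^{2})$ remainder or by the $\theta\|\epsilon\|$ cross terms in $f_{\theta} - f_{0}$. A secondary technical point is to make the genericity of (ii) precise; this is what implicitly pins down the meaning of \emph{generic} in the statement, and it is comfortable since the non-generic locus has Lebesgue measure zero in the state space.
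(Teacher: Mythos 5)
Your argument is correct, but it is organized genuinely differently from the paper's proof, so a comparison is worthwhile. The paper first proves a continuity-of-argmin lemma (compactness plus uniqueness of the best product approximation imply that the maximizer $\ket{\tilde\pi}$ for $\ket{\tilde\psi}$ stays in any prescribed vicinity of $\ket{\pi}$ once $\theta$ is small), then writes $\ket{\tilde\pi}=\sqrt{1-\delta^2}\ket{\pi}+\delta\ket{\chi}$ and bounds the new maximal overlap directly, $\tilde\lambda\le(\lambda+\delta\theta)/\mathcal N\le\lambda+\theta(\delta-\lambda\sqrt{1-\lambda^2})+\mathcal O(\theta^2)$, via an elementary estimate for $1/\mathcal N$; the linear term is negative once $\delta<\lambda\sqrt{1-\lambda^2}$. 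You instead never locate the new maximizer: you bound $f_\theta(\sigma)$ over the whole product manifold at once, using the gap-plus-uniform-convergence argument away from $\ket{\pi}$ (which is exactly the mechanism inside the paper's continuity lemma, deployed on function values rather than on the argmin) and the uniformly negative derivative $\partial_\theta f_\theta|_{\theta=0}(\pi)=-2\mu\lambda^2$ near $\ket{\pi}$; both routes rest on the same two genericity hypotheses (non-product state, BPA unique up to phase). Two remarks. First, your ingredient (ii), the negative-definite Hessian of $f_0$ on the product manifold, is superfluous: on $U$ you already have $f_0(\sigma)\le\lambda^2$ for free, since $\lambda^2$ is the \emph{global} maximum of $f_0$, and combining this with $\partial_\theta f_\theta|_{\theta=0}(\sigma)\le -c'<0$ on $U$ (by continuity from the value $-2\mu\lambda^2$ at $\ket{\pi}$) and a uniform $\mathcal O(\theta^2)$ remainder (the denominator $\mathcal N^2\ge 1$ keeps all $\theta$-derivatives bounded) already gives $f_\theta\le\lambda^2-c'\theta+C\theta^2<\lambda^2$. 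Dropping (ii) also spares you the obligation to actually prove that the nondegenerate-Hessian locus is generic, which you only sketch. Second, the cross-term worry you raise disappears for the same reason: do not Taylor-expand the coefficient $g(\sigma)=\partial_\theta f_\theta|_{\theta=0}(\sigma)$ around $\ket{\pi}$ (that is what produces the $\theta\|\epsilon\|$ terms); just bound it by a negative constant on $U$ by continuity. What the paper's route buys is explicit constants (the threshold on $\delta$ and the constant in its normalization estimate); what yours buys is that one never needs to control where the new maximum is attained.
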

Observation \ref{thm:mainclaim} provides the guarantee that the 
proposed algorithm yields a sequence of states with increasing 
geometric measure. 
The proof is given in Appendix~\ref{appendix-a} and comes with an interesting feature. It turns out that the proof does not rely on the particular product state structure of $\ket{\pi}$, so any figure of merit based on maximizing the overlap with pure states from some subset can be optimized with our method. This turns the algorithm into a powerful and tool with a universal applicability. Indeed, we adapt it to the dimensionality of entanglement (see Appendix~\ref{app:schmidt_rank}), the stabilizer rank (Appendix \ref{app:stabilizer}), matrix product states (Appendix~\ref{app:schmidt_rank}) as well as to the preparability in quantum networks~ (Appendix~\ref{app:triangle}). Remarkably, here novel states are found which are more distant to any network state as those known so far.


\section{Implementation}
Thanks to the update rule in Eq.~\eqref{eq:simpleupdate}, we can 
make use of advanced descent optimization algorithms in order to 
obtain faster convergence and higher robustness against local
optima~\cite{nesterov_book_2018,machine_learning_mehta_2019}. 
We have implemented a descent algorithm with momentum as well 
as the Nesterov accelerated gradient (NAG) \cite{nesterov1983}. 
The idea behind the momentum version is to keep track of the direction 
of the updates. More precisely, the update direction $\vert \eta_{n} \rangle$ 
in the $n$-th iteration will be a running average of the previously 
encountered updates $\vert \eta_{1} \rangle,...,\vert \eta_{n-1} \rangle$. 
For the NAG method, the update vector is, contrary to Eq.~(\ref{eq:simpleupdate}), 
evaluated at a point estimated from previous accumulated updates, and not at 
$\vert {\varphi} \rangle$.
A more detailed discussion and a comparison of the different methods can 
be found in Appendices~\ref{app:performance} and~\ref{app:random_starts}. The convergence of the algorithm in the case 
of qubits is also shown in Fig.~\ref{fig:convergence}.

After the algorithm has terminated, one obtains a tensor presented in a random basis.
In order to identify the state and to obtain a concise form, we have to find 
appropriate local basis for each party, such that the state becomes as simple 
as possible. This can be done by a suitable local unitary transformation, see 
Appendix~\ref{app:uniopt} for details.


\section{Results for qubits}
For two qubits, the maximally entangled state is the Bell state and our algorithm directly converges to 
this maximum. For the case of three qubits there are two different equivalence classes 
of genuine tripartite entangled states with respect to stochastic local operations and 
classical communication (SLOCC) \cite{three_qubits_ineuivalent_entangled_duer_2000}, namely 
$\vert \text{W} \rangle = (\vert 001 \rangle + \vert 010 \rangle + \vert 100 \rangle)/\sqrt{3}$ 
and $\vert \GHZ \rangle = (\vert 000 \rangle + \vert 111 \rangle)/\sqrt{2}$. While 
$G(\vert \GHZ \rangle) = 1/2$ one has $G(\vert \text{W} \rangle) = 5/9$, what turns out to be the 
maximizer among all tripartite states~\cite{geomeasure_three_qubit_characterization_2009}. Indeed, after $150$ iterations, the algorithm yields 
the W state. 
It should be noted, that in this case the maximizer belongs to the family of symmetric 
states. Operationally, the W state is the state with the maximal possible bipartite 
entanglement in the reduced two-qubit states~\cite{three_qubits_ineuivalent_entangled_duer_2000}.

For four qubits, the algorithm yields after $300$ iterations the state
\begin{align} \label{eq:4qubits}
\ket{\tilde{\text{M}}} = \frac{1}{\sqrt{3}}
(\vert \GHZ \rangle + e^{2 \pi i/3} \vert \GHZ_{34}\rangle + e^{4 \pi i/3} \vert \GHZ_{24} \rangle)
\end{align}
where $\vert \GHZ_{ij} \rangle$ means a four-qubit GHZ state where a bit flip is applied at 
party $i$ and $j$. Note that the phases form a trine in the complex plane and that the state 
is a phased Dicke state~\cite{entanglement_structure_factor_krammer_2009}. This state can be shown to be LU-equivalent 
to the so called Higuchi-Sudbery or M state~\cite{higuchi2000, gour2010}, which appears 
as maximizer of the Tsallis $\alpha$-entropy in the reduced two-particle states for 
$0< \alpha<2$.  Note that this state is not symmetric with respect to permutations of 
the parties. Similar to the W state, the entanglement of the state in Eq.~\eqref{eq:4qubits} 
appears to be robust, that is, uncontrolled decoherence of one qubit does not completely 
destroy the entanglement of the remaining qubits~\cite{higuchi2000}.

For five qubits the algorithm converges to a state $\vert G_{5} \rangle$, that can be 
identified with the ring cluster state (a $5$-cycle graph state) yielding a geometric 
measure of $0.86855 \approx \frac{1}{36} (33- \sqrt{3})$. The state  $\vert G_{5} \rangle$
appears in the context of the five-qubit error correcting 
code~\cite{5qubit_error_correction_2001}. Some basic facts concerning cluster and graph states
are given in Appendix~\ref{app:ame_and_graphs}. Similarly, for six qubits we obtain a graph state 
$\vert G_{6} \rangle$ with a measure of $0.9166 \approx \frac{11}{12}$. This is again
connected to quantum error correction, indeed, both states $\vert G_{5} \rangle$
and $\vert G_{6} \rangle$ are AME states, see also below. For seven qubits, we 
find a numerical state with maximally mixed two-body marginals, where the spectra 
of the three-body marginals are all the same. This motivates to introduce the class of maximally marginal symmetric states (MMS), as a natural extension of notion of AME states.

\begin{table}[t]
    \centering
    \begin{tabular}
    {|c|c|c|c|}
    \hline
    $n$ & $G_{\text{max}}^{\text{symm}}$ & $G_{\text{max}}$ & $\vert \varphi \rangle_{\text{max}}$ \\ 
    \hline \hline
         2 & $1/2$ & $1/2$ & $\vert \psi^{-} \rangle$\\ \hline
         3 & $0.5555 \approx 5/9 $ & $0.5555 \approx5/9 $& $\vert \text{W} \rangle$\\ \hline
         4 & $0.6666 \approx 2/3$ & $0.7777 \approx 7/9 $& $\vert \text{M} \rangle$ \\ \hline 
         5 & $\approx 0.7006$ & $0.8686 \approx (1/36)(33-\sqrt{3})  $& $\ket{G_{5}}$\\ \hline 
         6 & $0.7777 \approx 7/9 $ & $ 0.9166 \approx 11/12 $& $\ket{{G}_{6}}$\\ \hline 
         7 & $\approx 0.7967$ & $\geq 0.941$ & $\text{MMS}(7,2)$ \\ \hline
    \end{tabular}
    \caption{Maximally entangled states found by the algorithm for systems between 
    two and seven qubits. Here $\vert \varphi \rangle_{\text{max}}$ refers to the 
    state found by algorithm and $G_{\text{max}}$ denotes the geometric measure 
    of the corresponding state. $G_{\text{max}}^{\text{symm}}$ denotes the maximal entanglement among symmetric states, as shown in~\cite{aulbach2010}.}
    \label{tab:my_label}
\end{table}

\section{Relation to AME states} 
One calls a multiparticle state AME, if it is a maximally entangled state for 
any bipartition. For bipartite entanglement measures, it is well known that 
pure states with maximally mixed marginals are the maximally entangled states  \cite{max_entangled_states_pasquinucci_1998,monotones_vidal_2000, 
higuchi2000}. Consequently, an $n$-partite pure multi-qudit state is AME if 
all reductions to $\lfloor \frac{n}{2} \rfloor$ parties are maximally mixed, 
such a state is then denoted by $\text{AME}(n,d)$. Interestingly, not for all 
values of $n$ and $d$ AME states exist and quest for AME states is a central
problem in entanglement theory \cite{5open_question_horodecki_2022}. The non-existence of AME states was first encountered for four qubits \cite{higuchi2000}, 
but interestingly, the M state in Eq.~\eqref{eq:4qubits} that maximizes
the geometric measure can be viewed as the best possible replacement, since
the one-body marginals are maximally mixed and all $2$-body marginals, albeit 
not being maximally mixed have the same spectrum \cite{gour2010}. For five and
six qubits the states found by our algorithm are just the known AME states, 
see also Appendix \ref{app:ame_and_graphs}. For $n\geq 7$ no $\text{AME}(n,2)$ state exist \cite{huber2017,scott2004}. A potential approximation $\vert F \rangle$ 
to the AME state of seven qubits has been identified \cite{huber2017}, which 
is a graph state corresponding to the Fano plane.  This state has a measure 
of $G(\vert F \rangle ) = {15}/{16} = 0.9375$, thus smaller as the measure 
of the MMS state found by the algorithm.

While for certain values of $n$ and $d$ no AME state exists, it appears 
that multiple AME states can exist for other choices. In fact, it has 
been shown that those states can be even SLOCC inequivalent~\cite{raissi2020,burchardt2020}. 
It is in general a difficult problem to decide whether two states belong to the 
same SLOCC class, but for the case of AME states it can be drastically simplified 
using our algorithm in combination with the Kempf-Ness theorem~\cite{kempf_ness_thm_1979}. 
First, notice that states with maximally mixed $1$-body marginals belong to the so-called class of 
critical states~\cite{local_manipulation_gour_2011}. The theorem then assures that two critical states belong 
to the same SLOCC class if and only if they are LU equivalent. Hence,
if the geometric measure of those states differs, it already implies
SLOCC inequivalence.


\section{Systems of higher dimensions}
For the bipartite case the generalized Bell states $\vert \phi_{d} \rangle =
\sum_{j=0}^{d-1} \vert jj \rangle / \sqrt{d}$ are maximally entangled with 
$G(\vert \phi_{d} \rangle) = 1 - 1/d$. We find that for $2\leq d \leq 10$ 
the algorithm yields the corresponding state $\vert \psi_{d} \rangle$ with 
high fidelity, and that the number of iterations needed until convergence 
appears to be only weakly dependent on $d$, see also Appendix~\ref{app:performance}.

In the three-qutrit case 
we obtain the total antisymmetric state $\vert \Psi_3 \rangle$, 
given by 
\begin{align}
\vert \Psi_{3} \rangle =  \frac{1}{\sqrt{6}}
(\vert 012 \rangle +  \vert 201 \rangle + \vert 120 \rangle 
- \vert 210 \rangle - \vert 102 \rangle - \vert 021 \rangle)
\end{align}
In general, antisymmetric states $\vert \Psi_{n} \rangle $ can be constructed 
for all $n$-partite n-level systems and their geometric measure can be easily
computed analytically as
$\frac{n!-1}{n!}$~\cite{entanglement_asymm_states_virmani_2008}. In the 
particular case $n=3$ we obtain $G = \frac{5}{6} \approx 0.8333$. Note 
that $\vert \Psi_{3} \rangle$ is an AME state.  

More generally, in the tripartite case a procedure is known to construct 
AME states for arbitrary $d$~\cite{goyeneche2018}. This leads to
\begin{align} 
\label{eq:AME3d}
\text{AME}(3,d) \sim \sum_{i,j =0}^{d-1} \vert i \rangle \vert j \rangle \vert i+j \rangle, 
\end{align}
where $i+j$ is computed modulo $d$. Note that the state 
$\text{AME}(3,3)$ constructed according to Eq.~\eqref{eq:AME3d} only 
has a measure of $2/3$. 

For the case of three ququads our algorithm gives insights into the AME 
problem. First, the $\text{AME}(3,4)$ state corresponding to Eq.~(\ref{eq:AME3d})
has a geometric measure of $G= 0.75$. However, our algorithm yields a state given 
by 
\begin{align} 
\label{eq:AME34}
\begin{split}
    \vert \phi_{3,4}  \rangle = \frac{1}{2 \sqrt{2}} ( \vert 022 \rangle + \vert 033 \rangle + \vert 120 \rangle + \vert 1 31 \rangle \\ 
    + \vert 212 \rangle + \vert 203 \rangle + \vert 310 \rangle + \vert 301 \rangle  )
    \end{split}
\end{align}
with $G(\vert \phi_{3,4} \rangle ) = 7/8 = 0.875$. After applying 
local unitaries, this state can be seen as arising from three Bell pairs distributed between three parties in a triangle like configuration. In addition, $\vert \phi_{3,4} \rangle$ is an AME state, 
i.e., all one-party marginals are maximally mixed. As the geometric 
measure of the states $\vert \phi_{3,4} \rangle$ and $\text{AME}(3,4)$ differs, they belong to different SLOCC classes.

In the case of four qutrits the algorithm converges to a state with 
a geometric measure of $0.888 \approx \frac{8}{9}$. This state can be identified 
to be the $\text{AME}(4,3)$ state given by \cite{goyeneche2018}
\begin{align}
\begin{split}
     \text{AME}(4,3) = & \frac{1}{3} ( \vert 0000 \rangle + \vert 0112 \rangle + \vert 0221 \rangle + \vert 1011 \rangle  + \vert 1120 \rangle \\  & + \,  \vert 1202 \rangle  +  \vert 2022 \rangle + \vert 2101 \rangle + \vert 2210 \rangle) 
    \end{split}
\end{align}

For four ququads, the algorithm converges to the antisymmetric state $\vert \Psi_{4} \rangle$, yielding a measure of $\frac{23}{24} \approx 0.9583$. Interestingly, while being $1$-uniform, this state is not AME. The so far only known $\text{AME}(4,4)$, a graph state \cite{AME_qudit_helwig_2013,burchardt2020}, yields a measure of $\frac{15}{16} = 0.9375$. 
Finally, the recently found $\text{AME}(4,6)$~\cite{rather2021} is not maximally entangled 
with respect to the geometric measure, i.e., the algorithm finds states yielding a higher 
geometric measure.

Similarly, there exists a general procedure to construct $\text{AME}(5,d)$ 
states given by~\cite{nonbinary_qc_rains_1999,goyeneche2018}
\begin{align}
    \text{AME}(5,d) \sim \sum_{i,j,l=0}^{d-1} \omega^{il} \vert i \rangle \vert j 
    \rangle \vert i+j \rangle \vert l+j \rangle \vert l \rangle, 
\end{align}
where $\omega = e^{2 \pi i /d}$. 
In the case of a three-dimensional system, the algorithm converges to the 
$\text{AME}(5,3)$ state yielding a measure of approximately $0.96122$. 
Finally, we have $G(\text{AME}(5,4)) = \frac{31}{32} = 0.96875$. Here the 
algorithm yields a state $\vert \phi_{5,4} \rangle$ with a larger geometric 
measure, in particular $G(\vert \phi_{5,4} \rangle) > 0.975$.  
However, here we cannot identify 
a closed expression of the state. The numerical result suggests that 
the maximizer is again an AME state.


\section{Maximally entangled subspaces}
One can extend our method such that it also applies to 
subspaces. More precisely, we want to construct an orthonormal 
basis for a subspace $V$ such that the least entangled state 
in $V$ is as entangled as possible in comparison with all 
other possible subspaces. Note that this notion differs from 
the concept of genuine entangled subspaces (GES) \cite{demian2018}, 
where all states within the subspace have to be genuine entangled. 
However, our algorithm can readily be modified in order to search 
for GES with maximal genuine multipartite entanglement. 

We will explain the idea of the algorithm for a two-dimensional 
subspace of three qubits. 
First, we choose a two-dimensional subspace randomly, which can be 
described by a projector of the form 
$P = \vert v \rangle \langle v \vert + \vert w \rangle \langle w \vert$, 
where $\langle v \vert w \rangle =0$. Next, we compute the best rank-one 
approximation to $P$ given by taking the argmax of
$\text{sup}_{abc} \, \text{Tr}[\vert abc \rangle \langle abc \vert P]$. 
This can be done with the iteration described after Eq.~(\ref{eq-geodef}).
Again we call the optimizer $\vert \pi \rangle$. More 
generally, the optimization yields the best product state approximation 
to the state in the range of $P$ that is least entangled.
In particular, this implies that if $im(P)$ contains a product state, the 
assigned geometric measure will be zero. 
Then we compute the eigenvectors corresponding to the two largest eigenvalues 
of the operator $P- \theta \vert \pi \rangle \langle \pi \vert$ that we 
will call $\vert v_{1} \rangle $ and $ \vert v_{2} \rangle$. Clearly, this 
algorithm reduces to the one from the previous sections, if we choose 
the rank of the projector to be one.

It is known \cite{partha2004}, that the maximal dimension of a 
subspace of an $n$-partite qudit system which contains no product 
state is given by $\text{dim}(V) = d^{n} - dn +n-1$. Consequently, for two qubits, the maximal entangled 
subspace is of dimension one and spanned by a Bell state. Indeed, 
if we apply our algorithm to larger subspaces, the measure we 
assign to this subspaces stays zero.

For three qubits, the algorithm converges to the W state as one basis vector 
and to the state
\begin{align}
 \vert V \rangle = \frac{1}{\sqrt{3}}
 \big[\vert 011 \rangle + e^{\frac{2 \pi}{3} i} \vert 110 \rangle +e^{\frac{4 \pi}{3} i} \vert 101 \rangle 
 \big]
\end{align}
as the other. The subspace spanned by $\vert W \rangle$ and $\vert V \rangle$ 
then has a remarkable property: all states are maximally entangled, yielding 
the same geometric measure as the W state. This has potential applications
in information processing: In this subspace, qubit states may be encoded and 
then any set of these states is difficult to discriminate by local means
\cite{Hayashi2006}.

Concerning higher dimensions, we computed the maximally entangled subspace of 
dimension two for two qutrits. 
Here we obtain an embedded Bell state 
$\ket{\chi_1} = (\vert 01 \rangle - \vert 10 \rangle)/\sqrt{2}$ 
and the state
\begin{align}
\ket{\chi_2} =\frac{1}{\sqrt{14}} 
\big[ 
\vert 20 \rangle + \vert 02 \rangle + \sqrt{6} (\vert 21 \rangle + \vert 12 \rangle)
\big]
\end{align}
which can also be seen as the superposition of two Bell pairs. It turns out 
that the least entangled state within this subspace has a geometric measure of 
$1/2$. Again, applications can be envisaged, as any state in this subspace has at 
least one ebit of entanglement.



\section{Conclusion}
In this work we have presented an iterative method for the computation of maximally resourceful quantum states. We provided a convergence analysis and showed that in each step the resourcefulness of the iterates increase. We illustrated our approach for the special case of the geometric measure, allowing us to identify interesting quantum states, discover novel AME states, and characterize highly entangled subspaces which may be useful for information processing. We further demonstrated the universality of the algorithm for various other quantifiers, yielding novel forms of correlations in the triangle network.

Concerning further research, our results also suggest a variety of avenues for further theoretical exploration. Can the algorithm be used to find new AME states for cases where the existence is still open,  e.g., for systems consisting of more then five six-dimensional systems, or to find new SLOCC inequivalent AME states? In particular, we have numerical evidence that there exists a second AME state (besides \cite{rather2021}) for the case of four six-dimensional systems. 
From a mathematical perspective, 
the algorithm can give insights into the structure of tensor spaces and could offer intuition to solve open problems concerning the asymptotics 
of tensor norms~\cite{aubrun2017}. 


\begin{acknowledgments}
We thank 
Guillaume Aubrun,
Matthias Christandl,
Mariami Gachechiladze,
Felix Huber,
H. Chau Nguyen, 
Thorsten Raasch,
Ren\'e Schwonnek, 
Liqun Qi,
Franz Josef Stoiber,
Julio de Vicente, 
Nikolai Wyderka,
Zhen-Peng Xu,
Guofeng Zhang, 
and
Karol $\dot{\text{Z}}$yczkowski
for inspiring discussions. 
The University of Siegen is 
kindly acknowledged for enabling our computations through the \texttt{OMNI} cluster. 
This work was supported by the Deutsche Forschungsgemeinschaft (DFG, German Research 
Foundation, project numbers 447948357 and 440958198), the Sino-German Center 
for Research Promotion (Project M-0294), the ERC (Consolidator Grant 683107/TempoQ) 
and the German Ministry of Education and Research (Project QuKuK, BMBF Grant
No. 16KIS1618K).
JS acknowledges support from the House of Young Talents of the University of Siegen.
\end{acknowledgments}

\appendix


\section{Proof of Observation 1}
\label{appendix-a}
Let us first recapitulate the algorithm and fix the notation. Let $\vert \psi \rangle$ 
be the randomly drawn initial state and denote by $\vert \pi \rangle $ its best product 
state approximation (BPA), that is, 
\begin{align} \label{min}
    \vert \pi \rangle = \text{argmin} \lbrace \vert \pi \rangle \, 
    : \, 1- \vert \langle \varphi \vert \pi \rangle \vert^{2} \, 
    : \, \, \vert \pi \rangle \, \, \text{product state} \rbrace.
\end{align}
In particular, we can choose $\vert \pi \rangle$ such that $\lambda := \langle \pi \vert \psi \rangle > 0$. 
This allows for the computation of the update direction, 
$\vert \eta \rangle := (\Pi \vert \psi \rangle)/\mathcal{M}$, where 
$\Pi:= \openone - \vert \pi \rangle \langle \pi \vert$ is the projector 
onto the orthocomplement of the span of the BPA $\vert \pi \rangle$ and $\mathcal{M}$ 
a normalization such that $\langle \eta \vert \eta \rangle = 1$. It follows directly 
that $\mathcal{M}^2 = \langle \psi \vert P \vert \psi \rangle = 1 - \lambda^{2}$, hence 
$\mathcal{M} = \sqrt{1-\lambda^{2}}$. For a fixed step size $\theta >0$, the update 
rule is given by 
\begin{align} 
\label{app:nice_update}
\vert \psi \rangle \mapsto \vert \tilde{\psi} \rangle 
:= \frac{1}{\mathcal{N}} (\vert \psi \rangle + \theta \vert \eta \rangle).
\end{align}
where $\mathcal{N}$ is again a normalization. In the following, we will frequently 
make statements about generic states. In these cases, we require the assumptions
that a state is not a product state and that the BPA $\ket{\pi}$ is unique up to 
a phase. 

The first step in order to prove monotonicity of the algorithm is to show that 
small variations in the initial state can only lead to small variations in the BPA. 
As we will see, this follows from the more general observation that under certain 
conditions the value $y_0$, where a function $f(x_0,y)$ assumes its minimum 
(for a given $x_0$), depends continuously on $x_0$. Further, by virtue of the 
canonical embedding,  we can identify any $\vert \varphi \rangle \in \mathbb{C}^{n} $ 
with a $\vert \Tilde{\varphi} \rangle  \in \mathbb{R}^{2n}$ and consequently 
we can omit the absolute in Eq.~\eqref{min}. Then we have:

\begin{lemma} 
\label{app:lemma_1}
Let $X,Y$ be compact and $f : X \times Y \rightarrow \mathbb{R}$ 
be uniformly continuous. Further, suppose that for 
$x_{0} \in X$ the value $y_{0} := {\text{argmin}}_{y \in Y}  f (x_{0},y)$ 
is unique. Then for all $\varepsilon > 0$ there exists $\delta >0$ such that 
for all $x \in U_{\delta} (x_{0} )$ we have 
${\text{argmin}}_{y \in Y} \, f (x,y) \subset U_{\varepsilon}(y_{0})$, where 
$U_{\delta(x_{0})}$ and $U_{\varepsilon}(y_{0})$ denote vicinities of $x_0$ and 
$y_0$, respectively. In other words, the function $\text{argmin}$ is 
continuous in $x_0$.
\end{lemma}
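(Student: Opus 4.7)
The plan is to argue by contradiction, exploiting compactness of $Y$ to extract a convergent subsequence of would-be counterexample minimizers and then using continuity of $f$ to force the limit back to $y_0$, contradicting uniqueness.

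More concretely, I would assume the conclusion fails: there is some $\varepsilon > 0$ such that for every $n \in \mathbb{N}$ one can find $x_n \in U_{1/n}(x_0)$ and a point $y_n \in \mathrm{argmin}_{y \in Y} f(x_n, y)$ with $y_n \notin U_\varepsilon(y_0)$. The set $Y \setminus U_\varepsilon(y_0)$ is closed in the compact space $Y$, hence compact, so the sequence $(y_n)$ admits a convergent subsequence $y_{n_k} \to y^\ast$ with $\|y^\ast - y_0\| \geq \varepsilon$. Simultaneously $x_{n_k} \to x_0$ by construction.

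The next step is to pass to the limit in the inequality that defines $y_{n_k}$ as a minimizer, namely $f(x_{n_k}, y_{n_k}) \leq f(x_{n_k}, y_0)$. Uniform continuity of $f$ on the compact product $X \times Y$ (continuity alone would also suffice here) gives $f(x_{n_k}, y_{n_k}) \to f(x_0, y^\ast)$ and $f(x_{n_k}, y_0) \to f(x_0, y_0)$, so
\begin{equation}
f(x_0, y^\ast) \;\leq\; f(x_0, y_0).
\end{equation}
Combined with the reverse inequality that follows from $y_0$ being a minimizer at $x_0$, this says $y^\ast$ is also a minimizer of $f(x_0,\cdot)$. Uniqueness of $y_0$ then forces $y^\ast = y_0$, contradicting $\|y^\ast - y_0\| \geq \varepsilon$.

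The argument is essentially the standard proof that the $\mathrm{argmin}$ correspondence is upper hemi-continuous at points where it is single-valued, and I do not expect any real obstacle: the only substantive input is compactness of $Y$ (to get the subsequence) together with continuity of $f$ (to pass to the limit). The one point that deserves care is to state precisely what is meant by ``argmin is a set'' in the conclusion, so that the claim $\mathrm{argmin}_{y\in Y} f(x,y) \subset U_\varepsilon(y_0)$ is properly quantified over \emph{every} minimizer of the perturbed problem, not just a distinguished one; this is exactly what the contradiction hypothesis above is designed to handle. Once the lemma is in hand, its application in the main text is immediate: identifying $x$ with the state $\vert\psi\rangle$ and $y$ with a candidate product state, and using genericity to ensure the BPA is unique, one obtains that the BPA depends continuously on $\vert\psi\rangle$, which is the ingredient needed for the monotonicity proof of Observation~\ref{thm:mainclaim}.
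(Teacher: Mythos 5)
Your proof is correct, but it takes a genuinely different route from the paper's. You argue by contradiction via sequential compactness: extract a convergent subsequence of offending minimizers $y_{n_k}\to y^\ast$ outside $U_\varepsilon(y_0)$, pass to the limit in $f(x_{n_k},y_{n_k})\le f(x_{n_k},y_0)$, and conclude $y^\ast$ is a second minimizer, contradicting uniqueness. The paper instead gives a direct argument: it sets $\xi = \min_{y\in \overline{U}_\varepsilon(y_0)} f(x_0,y) - f(x_0,y_0) > 0$ (the gap between the minimum over the closed complement of the vicinity and the global minimum, positive by uniqueness and compactness), then uses uniform continuity to pick $\delta$ so that $|f(\tilde x,y)-f(x_0,y)|<\xi/2$ for all $y$, whence $f(\tilde x,y_0)<f(x_0,y_0)+\xi/2<f(\tilde x,y)$ for every $y$ outside the vicinity, forcing all minimizers of $f(\tilde x,\cdot)$ into $U_\varepsilon(y_0)$. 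The trade-off: the paper's route yields an explicit $\delta$ in terms of the gap $\xi$ and the modulus of uniform continuity (in the spirit of the later quantitative choice of the step size $\Theta$), whereas your route is non-constructive but leaner in hypotheses --- as you correctly note, joint continuity suffices, and on the compact product $X\times Y$ uniform continuity is anyway automatic by Heine--Cantor, which is exactly how the paper's Corollary~\ref{app:corr_1} discharges that hypothesis. Both arguments use compactness of $Y$ essentially (you for subsequence extraction, the paper for attainment of the minimum over the closed set $\overline{U}_\varepsilon(y_0)$), and both correctly treat the conclusion as a statement about \emph{every} minimizer of the perturbed problem. Your observation that this is the standard upper hemi-continuity of the argmin correspondence at points where it is single-valued is apt.
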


\begin{proof}
For the given $\varepsilon$ we can split the set $Y$ in the vicinity 
$U_{\varepsilon} (y_{0})$ and its complement $\overline{U}_{\varepsilon} (y_{0})$.
In particular we have 
\begin{align}
   f(x_{0}, y_{0} ) &= 
    \min_{y \in U_{\varepsilon}(y_{0})}{f (x_{0} ,y )} 
    \nonumber
    \\
    & < 
    \min_{y \in \overline{U}_{\varepsilon} (y_{0})}{f (x_{0} , y )} 
    =: f ( x_{0}, \tilde{y}_{0}),
\end{align}
that is, $\tilde{y}_{0}$ denotes the value where the minimum in 
$\overline{U}_{\varepsilon} (y_{0})$ is assumed. Let us denote the 
difference between the function values as
\begin{align}
\xi = f ( x_{0}, \Tilde{y}_{0}) -  f(x_{0}, y_{0} ) > 0.
\end{align}

By the uniform continuity we can choose $\delta>0$ such that for all 
$\tilde{x} \in \mathbb{R}^{n}$ with $\Vert \tilde{x}- x_{0} \Vert < \delta$ 
and for all $y$ we have 
\begin{align}
\vert f ( \tilde{x} , y) - f( {x}_0, y )\vert < \frac{\xi}{2}.
\end{align}
Then we have 
\begin{align}
f(\tilde{x}, y_0) < f(x_0, y_0) + \frac{\xi}{2},
\end{align}
but for all $y\in \overline{U}_{\varepsilon} (y_{0})$
\begin{align}
f(\tilde{x}, y) > f(x_0, \tilde{y}_0) - \frac{\xi}{2} > f(x_0, y_0) + \frac{\xi}{2},
\end{align}
which implies that the minimum of $f(\tilde{x}, y)$ lies in the vicinity 
$U_{\varepsilon} (y_{0})$. 
\end{proof}

\begin{corollary}\label{app:corr_1}
Let $\vert \varphi \rangle $ be a pure quantum state and suppose that 
its BPA $\vert \pi \rangle $ is unique. Then, for all $\tau >0$, there 
exists a $\xi >0$ such that the BPA $\vert \Tilde{\pi} \rangle $ of  
$\vert \Tilde{\varphi} \rangle \in \mathcal{U}_{\xi} ( \vert \varphi \rangle ) $ 
lies in $\mathcal{U}_{\tau} (\vert \pi \rangle )$.
\end{corollary}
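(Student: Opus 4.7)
The plan is to recognize that Corollary~\ref{app:corr_1} is a direct application of Lemma~\ref{app:lemma_1} once the appropriate sets and objective function are identified. I would take $X$ to be the unit sphere of pure states, identified via the canonical embedding with the unit sphere in $\mathbb{R}^{2n}$, and $Y \subset X$ the set of normalized product states of the given multipartite structure. Both are compact: $X$ is a sphere, and $Y$ is the intersection of the Segre variety with the unit sphere, hence closed (being cut out by continuous polynomial equations) inside a compact set. Define
\begin{align}
f(\vert \varphi \rangle, \vert \pi \rangle) = 1 - \vert \langle \varphi \vert \pi \rangle \vert^{2},
\end{align}
which is polynomial in the real coordinates, hence smooth and, by compactness of $X \times Y$, uniformly continuous.

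Next, I would observe that by the very definition in Eq.~\eqref{min}, the BPA of $\vert \varphi \rangle$ is $\vert \pi \rangle = \text{argmin}_{\vert \pi' \rangle \in Y} f(\vert \varphi \rangle, \vert \pi' \rangle)$, and the hypothesis of the corollary (uniqueness of the BPA) is exactly the uniqueness hypothesis of Lemma~\ref{app:lemma_1}. Applying that lemma with $x_{0} = \vert \varphi \rangle$, $y_{0} = \vert \pi \rangle$, and $\varepsilon = \tau$ produces a $\delta$ — which we rename $\xi$ — such that every $\vert \tilde{\varphi} \rangle \in \mathcal{U}_{\xi}(\vert \varphi \rangle)$ has all of its minimizers, and in particular its BPA $\vert \tilde{\pi} \rangle$, contained in $\mathcal{U}_{\tau}(\vert \pi \rangle)$. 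This is precisely the conclusion sought.

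The one nontrivial subtlety is the standard phase ambiguity of the BPA: as a vector in Hilbert space, a minimizer of $f(\vert \varphi \rangle, \cdot)$ is unique only up to a global phase, so Lemma~\ref{app:lemma_1} does not literally apply to a Hilbert-space representative. The cleanest remedy is either to pass to projective space equipped with the Fubini--Study-type distance $\mathrm{dist}([\pi_{1}],[\pi_{2}]) = \sqrt{1 - \vert \langle \pi_{1} \vert \pi_{2} \rangle \vert^{2}}$, under which $Y$ is still compact and $f$ descends to a uniformly continuous function, or simply to fix the phase convention $\langle \pi \vert \varphi \rangle \in \mathbb{R}_{\geq 0}$ already adopted by the authors just before Lemma~\ref{app:lemma_1}. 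Either fix restores literal uniqueness and reduces the corollary to a one-line invocation of the lemma; beyond this cosmetic step I do not foresee any genuine obstacle.
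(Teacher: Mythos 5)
Your proof is correct and follows essentially the same route as the paper's: establish compactness and uniform continuity (via Heine--Cantor) of the overlap function and then invoke Lemma~\ref{app:lemma_1}. You are in fact slightly more careful than the paper, whose proof writes $M = B_1 \times B_1$ (the full unit sphere for $Y$ rather than the compact set of product states) and does not explicitly address the phase ambiguity; both of your refinements are correct and do not change the substance of the argument.
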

\begin{proof}
The function $f(x,y) := \vert \langle x,y \rangle \vert^{2}$ is 
continuous on $\mathbb{R}^{2n} \times \mathbb{R}^{2n}$. Further, 
the space $B_{1} := \lbrace x \in \mathbb{R}^{2n} \, : \, \vert \vert x \vert \vert =1 \rbrace$ 
is compact and thus also $M:=B_{1} \times B_{1}$. Then, by the Heine-Cantor 
theorem~\cite{continuity}, $f$ is uniformly continuous on $M$. Since we 
assume $\vert \pi \rangle$ to be unique, we can apply Lemma~\ref{app:lemma_1}, 
which guarantees for all $\tau > 0$ the existence of $\xi >0$ such that 
$\vert \tilde{\pi} \rangle \in U_{\tau}(\vert \pi \rangle)$ for 
$\vert \tilde{\psi} \rangle \in U_{\xi}(\vert \psi \rangle)$. 
\end{proof}

According to Eq.~\eqref{app:nice_update}, the updated state needs a renormalization given by $\mathcal{N}= \mathcal{N}(\vert \psi \rangle , \theta)$. The next ingredient for the proof of the main result is a Lemma
that gives later an upper approximation of the function $1/\mathcal{N}$. 

\begin{lemma} \label{app:lemma_2}
There exists $C>0$ such that for all $q \in [0,1]$ and $x>0$ we have 
\begin{align} \label{eq:norm_estimate}
\frac{1}{\sqrt{1+2qx+x^2}} < 1- qx + Cx^{2}.
\end{align}
More precisely, the above inequality holds for all $C \geq 3$. 
\end{lemma}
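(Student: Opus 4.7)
The plan is to bound the left-hand side by a short chain of elementary inequalities obtained by squaring or clearing denominators, rather than by a Taylor expansion. The strategy is to first replace the square root by the simpler quantity $1+qx$, then bound $1/(1+qx)$ by a quadratic, and finally absorb the remaining $q^{2}$-dependence using $q\leq 1$.

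I would first observe that for $q\in[0,1]$ and $x\geq 0$ one has $\sqrt{1+2qx+x^{2}}\geq 1+qx$, since squaring both (nonnegative) sides reduces the inequality to $(1-q^{2})x^{2}\geq 0$, which holds because $q\leq 1$. Inverting produces $\frac{1}{\sqrt{1+2qx+x^{2}}}\leq \frac{1}{1+qx}$. Next, the identity $(1+qx)(1-qx+q^{2}x^{2})=1+q^{3}x^{3}\geq 1$ immediately yields $\frac{1}{1+qx}\leq 1-qx+q^{2}x^{2}$. Combining these two bounds and using $q^{2}\leq 1$ gives
\begin{align*}
\frac{1}{\sqrt{1+2qx+x^{2}}}\;\leq\;1-qx+q^{2}x^{2}\;\leq\;1-qx+x^{2}\;<\;1-qx+Cx^{2},
\end{align*}
where the last inequality is strict for $x>0$ whenever $C>1$. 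In particular any $C\geq 3$ is amply sufficient, which is the claim of the lemma.

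The only point that requires attention is that every inequality in the chain above, except the final one, is weak, so strictness of the desired bound has to be produced by the trivial step $x^{2}<Cx^{2}$ for $x>0$. No genuine obstacle arises; the argument is essentially a routine sequence of algebraic manipulations, and incidentally it shows that the constant $C\geq 3$ stated in the lemma is far from sharp (any $C>1$ works). If one prefers a more uniform-looking estimate one could instead reduce the problem to the two regimes $x\leq 1$ and $x\geq 1$ and apply the standard bound $(1+y)^{-1/2}\leq 1-y/2+3y^{2}/8$ with $y=2qx+x^{2}$, but the route above avoids splitting into cases.
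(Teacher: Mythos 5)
Your proof is correct, and it takes a genuinely different route from the paper's. The paper squares the inequality, expands everything into a degree-six polynomial $f_2x^2+\dots+f_6x^6$ in $x$, and then checks case by case that each coefficient $f_k(C)$ can be made nonnegative for $C$ large enough; this is laborious, requires a sub-case analysis on whether $q^2\geq 1/2$, and only delivers the non-sharp threshold $C\geq 3$ (indeed the paper's own closing sentence lands on $C>2$, slightly at odds with the stated constant). Your chain $\sqrt{1+2qx+x^2}\geq 1+qx$ followed by $(1+qx)(1-qx+q^2x^2)=1+q^3x^3\geq 1$ and $q^2\leq 1$ avoids all of that: each step is a one-line verification, the strictness is cleanly isolated in the final $x^2<Cx^2$, and you get the stronger conclusion that any $C>1$ works. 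Since the lemma is only used downstream to certify the existence of \emph{some} constant $C$ controlling the $1/\mathcal{N}$ expansion, nothing is lost by your sharper and shorter argument, and it makes the non-optimality of the paper's constant explicit.
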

\begin{proof}
As both sides of Eq.~\eqref{eq:norm_estimate} are positive, we can square them 
such that the inequality remains true. 
This yields the equivalent inequality
\begin{align}
\begin{split}
    0  < & [2C-3q^2 +1]x^2 + 2q[C-1+q^2]x^3 + \\
    & + [C^2 - C(4q^2-2) +q^2]x^4 + 2q[C^2-C]x^5 + C^2x^6 \\
    & =: f_{2}x^2 + f_{3}x^3+f_{4}x^4+f_{5}x^5+f_{6}x^6
    \end{split}
\end{align}
Now observe that for each of the constants $f_{k} = f_{k}(C)$ there is a $C_{k} >0$ such that 
$f_{k}(C) \geq 0$ for all $C \geq C_{k}$. Indeed, we have 
$C_{2} := \text{max} \lbrace (1/2)(3q^2-1), 0 \rbrace$,
$C_{3} := 1-q^2$, 
$C_{5} = C_{6} := 1$. 
The choice of $C_{4}$ depends on whether $q^2 \geq 1/2$ or not. If we denote $\alpha = \vert 4 q^2-2 \vert$, we obtain for the case $q^2<1/2$ that $C^2+ \alpha C + q^2 >0$, what is trivially fulfilled for any $C\geq 1$. If  $q^2 > 1/2$, we need $C^2-\alpha C +q^2 >0$. But $C^2-\alpha C +q^2 \geq C^2 - \alpha C = C(C-\alpha)>0$, we obtain $C >\alpha$.  
In general, we have $C_{4} := \text{max} \lbrace 1, \vert 4q^2-2 \vert \rbrace$. This implies that 
for $C\geq \tilde{C} := \text{max} \, \lbrace  C_{k} \, \vert \, k=2,...,6 \rbrace$ all coefficients are positive. Hence $0< f_{2}x^2 $ implies $0 <f_{2}x^2 + f_{3}x^3+f_{4}x^4+f_{5}x^5+f_{6}x^6 $ if $x>0$. Consequently, it is sufficient to only consider the problem $0 < f_{2}x^{2}$, what is true for $C \geq C_{7} := (1/2)(3q^2-1)$. Hence, choosing  $C \geq \text{max} \lbrace \Tilde{C},C_{7} \rbrace$ yields the claim. Taking the maximum over all $C_{k}$ with respect to $q \in [0,1]$ yields that $C>2$. 
\end{proof}

\begin{theorem}
Let $\vert \psi \rangle$ be a generic pure quantum state. Then there exists $\Theta >0$ 
such that for the updated state $\vert \tilde{\psi} \rangle$ according to 
Eq.~\eqref{app:nice_update} with step-size $\Theta > \theta > 0$ we have 
$G(\vert \psi \rangle ) < G(\vert \tilde{\psi} \rangle)$. 
\end{theorem}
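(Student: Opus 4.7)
The plan is to upgrade the trivial bound of Eq.~\eqref{eq:crit}, which only controls the overlap with the old closest product state $\vert \pi \rangle$, to a bound on the overlap with \emph{every} product state. The three ingredients are: (i) the identity $\langle \eta \vert \psi \rangle = \sqrt{1-\lambda^{2}}$, which gives $\mathcal{N}^{2} = 1 + 2\theta\sqrt{1-\lambda^{2}} + \theta^{2}$ and lets me apply Lemma~\ref{app:lemma_2} with $q=\sqrt{1-\lambda^{2}}$ to obtain $1/\mathcal{N} < 1 - \theta\sqrt{1-\lambda^{2}} + C\theta^{2}$; (ii) Corollary~\ref{app:corr_1}, which ensures that the new BPA $\vert \tilde{\pi} \rangle$ stays close to $\vert \pi \rangle$; (iii) the orthogonality $\langle \pi \vert \eta \rangle = 0$, which makes the crossterm $\langle \tilde{\pi} \vert \eta \rangle$ small whenever $\vert \tilde{\pi} \rangle$ is close to $\vert \pi \rangle$.

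The first step is to quantify how far $\vert \tilde{\psi} \rangle$ is from $\vert \psi \rangle$. Since $\vert \tilde{\psi} \rangle - \vert \psi \rangle = (1/\mathcal{N}-1)\vert \psi \rangle + (\theta/\mathcal{N})\vert \eta \rangle$, the norm is $O(\theta)$ and vanishes as $\theta \to 0$. Given any target radius $\tau > 0$, I choose $\Theta_{1} > 0$ so that $\theta < \Theta_{1}$ forces $\Vert \tilde{\psi} - \psi \Vert$ below the threshold $\xi(\tau)$ supplied by Corollary~\ref{app:corr_1}, which by genericity (uniqueness of $\vert \pi \rangle$) is applicable. Then the BPA $\vert \tilde{\pi} \rangle$ of $\vert \tilde{\psi} \rangle$ satisfies $\Vert \tilde{\pi} - \pi \Vert < \tau$; writing $\vert \tilde{\pi} \rangle = \vert \pi \rangle + \vert \delta \rangle$ with $\Vert \delta \Vert < \tau$, I get the two bounds $\vert \langle \tilde{\pi} \vert \psi \rangle \vert \leq \lambda$ (definition of BPA) and $\vert \langle \tilde{\pi} \vert \eta \rangle \vert = \vert \langle \delta \vert \eta \rangle \vert \leq \tau$.

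Assembling everything via the triangle inequality and Lemma~\ref{app:lemma_2},
\begin{align*}
\vert \langle \tilde{\pi} \vert \tilde{\psi} \rangle \vert \;\leq\; \frac{\lambda + \theta\tau}{\mathcal{N}} \;<\; (\lambda + \theta\tau)\bigl(1 - \theta\sqrt{1-\lambda^{2}} + C\theta^{2}\bigr) \;=\; \lambda + \theta\bigl(\tau - \lambda\sqrt{1-\lambda^{2}}\bigr) + O(\theta^{2}).
\end{align*}
Because $\vert \psi \rangle$ is generic (in particular not a product state), one has $0 < \lambda < 1$, so $\lambda\sqrt{1-\lambda^{2}} > 0$. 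I therefore fix $\tau := \tfrac{1}{2}\lambda\sqrt{1-\lambda^{2}}$, which makes the linear-in-$\theta$ coefficient strictly negative, and then pick $\Theta_{2} > 0$ small enough that the explicit $O(\theta^{2})$ remainder (whose constant depends only on $\lambda$, $\tau$ and $C$) is dominated by the linear gain. Setting $\Theta := \min\{\Theta_{1},\Theta_{2}\}$ gives $\vert \langle \tilde{\pi} \vert \tilde{\psi} \rangle \vert < \lambda$ for every $0 < \theta < \Theta$, hence $\lambda^{2}(\vert \tilde{\psi} \rangle) < \lambda^{2}(\vert \psi \rangle)$ and $G(\vert \tilde{\psi} \rangle) > G(\vert \psi \rangle)$.

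The delicate part is the order of quantifiers: $\tau$ must be chosen before $\Theta_{1}$, because the admissible perturbation $\xi(\tau)$ from Corollary~\ref{app:corr_1} depends on $\tau$; yet the comparison $\tau < \lambda\sqrt{1-\lambda^{2}}$ that kills the linear term has to be made in terms of the \emph{original} $\lambda$, which is a fixed property of $\vert \psi \rangle$. Once this two-step choice is set up correctly and one checks that the constant $C$ in Lemma~\ref{app:lemma_2} is uniform in the relevant range of $q$, there are no further subtleties: the whole argument reduces to showing that a single explicit polynomial in $\theta$ becomes negative for small positive $\theta$, which it does by design.
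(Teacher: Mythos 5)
Your proposal is correct and follows essentially the same route as the paper's own proof: the identity $\langle\psi\vert\eta\rangle=\sqrt{1-\lambda^{2}}$ combined with Lemma~\ref{app:lemma_2} to expand $1/\mathcal{N}$, Corollary~\ref{app:corr_1} to keep the new BPA within a prescribed distance of the old one, the bound $|\langle\tilde{\pi}\vert\psi\rangle|\leq\lambda$ from the definition of $\lambda$, and a final two-threshold choice $\Theta=\min\{\Theta_1,\Theta_2\}$ to make the linear term dominate the quadratic remainder. The only cosmetic difference is your additive parametrization $\vert\tilde{\pi}\rangle=\vert\pi\rangle+\vert\delta\rangle$ in place of the paper's $\sqrt{1-\delta^{2}}\vert\pi\rangle+\delta\vert\chi\rangle$, which changes nothing of substance.
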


\begin{proof}
Let us start with some step-size $\theta_{0} > 0$ that we will choose in the end 
appropriately and consider
\begin{align}
\vert \tilde{\psi} \rangle = \frac{1}{\mathcal{N}} 
(\vert \psi \rangle + \theta_{0} \vert \eta \rangle). 
\label{eq-app-1}
\end{align}  
It is important to note that $\vert \eta \rangle$ is a normalized state, that is,
$\vert \eta \rangle = 1/(\sqrt{1-\lambda^2}) (\openone - \vert \pi \rangle \langle \pi \vert) 
\vert \psi \rangle$. This yields $\langle \psi \vert \eta \rangle = \sqrt{1-\lambda^2}$. 

The BPA $\vert \tilde{\pi} \rangle $ of $\vert \tilde{\psi}\rangle$ can be parameterized 
using the old product state, i.e., $\vert \tilde{\pi} \rangle = \sqrt{1-\delta^2} \vert \pi \rangle + \delta \vert \chi \rangle$, for a normalized, appropriately chosen $\vert \chi \rangle$ and $\delta > 0.$ 
Using $\langle \pi \vert \eta \rangle =0$ 
and 
$|\langle \chi \vert \eta \rangle| \leq 1$
we obtain
\begin{align}
    \tilde{\lambda} := &
    | \langle \tilde{\pi} \vert \tilde \psi \rangle| 
    = \frac{1}{\mathcal{N}} 
    \big|
    \braket{\tilde{\pi}}{\psi} + \theta_0 \delta \braket{\chi}{\eta}  
    \big| 
    \nonumber
    \\
   & \leq
    \frac{1}{\mathcal{N}} 
    \big(
    |\braket{\tilde{\pi}}{\psi}|
    + 
    |\theta_0 \delta \braket{\chi}{\eta}|  
    \big)
    \leq
    \frac{1}{\mathcal{N}} (\lambda + \delta \theta_{0}).
\end{align}
Using that $\mathcal{N}=\sqrt{1+2\theta_0 \sqrt{1-\lambda^2}+ \theta_0^2}$  and 
Lemma~\ref{app:lemma_2} there exists $C>0$ such that 
\begin{align}
    \Tilde{\lambda} & < (1- \theta_{0} \sqrt{1-\lambda^2} + C \theta_{0}^{2}) (\lambda + \delta \theta_{0})
    \nonumber \\
    & = \lambda + \delta \theta_{0} - \lambda \sqrt{1-\lambda^2} \theta_{0} 
    + \theta_0^2\big[C (\lambda + \delta \theta_0) - \delta \sqrt{1-\lambda^2}\big]
    \nonumber \\
    & = \lambda + \theta_0 (\delta - \lambda \sqrt{1-\lambda^2}) + \mathcal{O}(\theta_0^2).
    \label{eq-app-2}
\end{align}
Note that $\lambda \sqrt{1-\lambda^2}>0$, since $0\neq\lambda\neq1$. This comes from the fact that
a generic state is not a product state with $\lambda=1$ and any state has at least some overlap
with some product state. So, if $\delta < \lambda \sqrt{1-\lambda^2}$ we have 
$\tilde{\lambda} < \lambda$ for suitably small $\theta_0$.

It remains to show that we can guarantee that $\delta$ obeys this condition. We start with a given
value of $\lambda$ and consider a number $0 < \delta_1 < \lambda \sqrt{1-\lambda^2}$. According to 
 Corollary~\ref{app:corr_1}, we can find a $\theta_1>0$ such that 
 $\vert \tilde{\pi} \rangle \in U_{\delta_{1}} (\vert \pi \rangle)$ 
 if $\vert \tilde{\psi} \rangle \in U_{\theta_1} (\vert \psi \rangle)$. 
 Then, this gives us an upper bound on $\theta_0$ for Eq.~(\ref{eq-app-1}), 
 so that the resulting $\delta<\delta_1$ in Eq.~(\ref{eq-app-2}) is small 
 enough to guarantee a negative slope for the linear term. Still, 
 $\tilde{\lambda} < \lambda$ is not guaranteed, due to the 
 $\mathcal{O}(\theta_0^2)$ term in Eq.~(\ref{eq-app-2}). But, 
 for the given values of $\lambda, \delta_1$ and $C$, we can 
 also compute from Eq.~(\ref{eq-app-2}) a second threshold $\theta_2$, 
 which guarantees $\tilde{\lambda} < \lambda$. Then we can take finally
 in the statement of the Theorem $\Theta = \min\{\theta_1, \theta_2\}$
 and the proof is complete.
 \end{proof}
It is remarkable that in this proof the fact that $\ket{\pi}$ is a product
state was never used. So, the algorithm can also be used if the overlap with 
states from some other subset of the (pure) state space shall be minimized. In particular, the given subset must not necessarily be presented in form of a smooth submanifold, but could also be discrete (see Appendix \ref{app:stabilizer}).
Further, the algorithm also applies to the case where each particle may have a different degree of freedom, e.g., $\mathbb{C}^{2} \otimes \mathbb{C}^{3} \otimes \mathbb{C}^{5}$. For this, only the subroutine of computing the best rank-$1$ approximation has to be modified.

\section{Algorithm for Haar-random starting points}
\label{app:random_starts}
In this section we first discuss how (pure) quantum states can be sampled according to the Haar measure. Thereupon we analyse how the geometric measure is distributed for small multi-qubit systems and show how the trajectories of the algorithm behave for different (Haar random) starting points.
 
As we want to sample from the set of pure states, i.e., $\vert \varphi \rangle \in (\mathbb{C}^{d})^{\otimes n}$, we can identify the state space with the unit sphere of $\mathbb{C}^{d^{n}}$. We will denote the $n$ dimensional complex unit sphere by $\mathbb{S}^{n-1}$. Further, we say that a complex-valued random variable $X$ is standard normal distributed, denoted by $X \sim \mathcal{N}(0,1)$, if the real-valued random variables $\Re(X)$ and $\Im (X)$ are independent and standard normal distributed (mean value 0 and variance 1). In order to sample a point uniformly at random according to Haar measure on $\mathbb{S}^{n-1}$, see also Ref.~\cite{random_matrices_karol_2003} for more detailed explanations,
one might consider a sequence $(X_{1}, X_{2},..., X_{n})$ of normal distributed independent random variables $X_{k} \sim \mathcal{N}(0,1)$. Then, by the property of Gaussians, the vector $(X_{1}, X_{2},...,X_{n}) \in \mathbb{C}^{n}$ is a rotationally invariant $n$-dimensional Gaussian. Normalizing the vector $(X_{1}, X_{2},...,X_{n})$ yields a uniformly random point on $\mathbb{S}^{n-1}$.

\begin{figure}[t]
\centering
    \includegraphics[width=0.9\columnwidth]{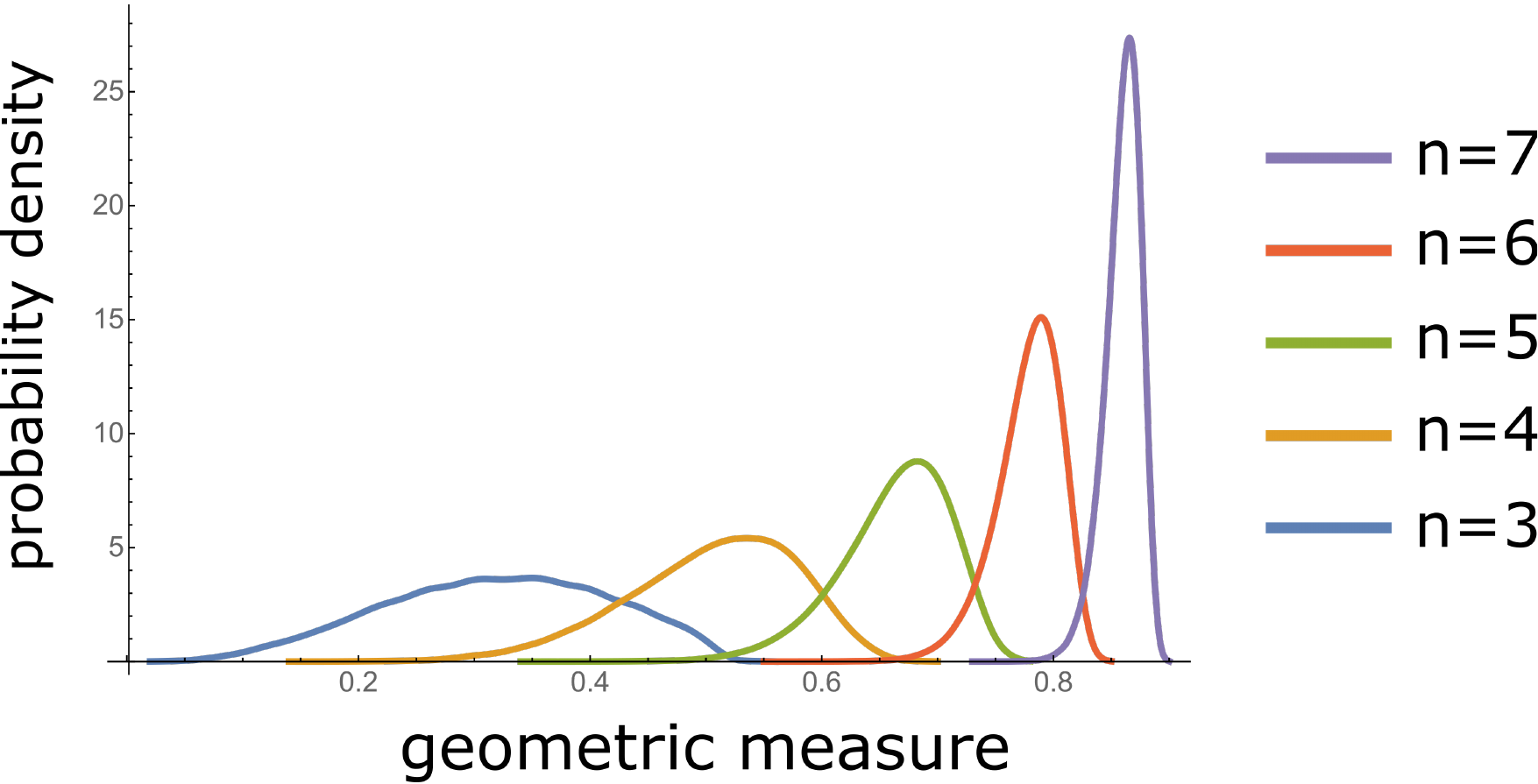}
    \caption{The probability distribution of the geometric measure for different multi-qubit systems. It can be easily seen that the maximum of the distribution is shifted to higher values when the number of parties increases. From the approximation to the distribution, we can also calculate the first and second moments, see Tab.~\ref{tab:expected_entanglement}.
    \label{fig:typical_entanglement}}
\end{figure}

In the following we numerically approximate the distribution of entanglement in multi-qubit systems with respect to the Haar measure. The distribution is presented in Fig.~\ref{fig:typical_entanglement}. For this we have randomly sampled $n$-qubit states and computed the corresponding geometric measure. For the cases $n=3,4$ we have chosen $10^5$ states. To compute the geometric measure for each of those states, $20$ iterations and $20$ random starting points were sufficient in order to get a robust output. For $n=5,6,7$ we have chosen $10^6$ states and the calculation of the geometric measure was implemented using $80$ iterations and $100$ random starting points, yielding a robust estimate for the geometric measure. More precisely, we regard a computation of the geometric measure as robust, if the variance of the different outputs is smaller than $10^{-7}$. It should be noted that there is no analytical expression either for the exact distribution of $G$ w.r.t. the Haar measure nor for its moments, e.g., expectation value or variance. However, using the geometric measure of the sampled quantum states we can obtain estimates for the moments of the distribution. In order to get optimal estimates for higher moments, i.e., with small error probabilities, one can use the method of $U$-statistics~\cite{u_statistics_hoeffding_1992}. For the mean value, Hoeffding's inequality yields that the estimated expectation value $\mathbb{E}[G]$ is very close to the true value with very high success probability. The exact numerical values can be found in Tab.~\ref{tab:expected_entanglement}. The geometric measure of typical tensors was also recently numerically investigated in 
Ref.~\cite{typical_entanglement_fitter_2022}.

More generally, there exist bounds which limit the measure of sets of states having a small amount of entanglement~\cite{Bremner2009}, see also Appendix \ref{app:upper_bounds}. Particularly, for multi-qubit system composed of more than $11$ constituents, one has
\begin{align} \label{eq:bound_low_entanglement}
    \mu_{Haar} [\lbrace \vert \varphi \rangle \in (\mathbb{C}^{2})^{\otimes n} \, : \, G(\vert \varphi \rangle) < 1 - \frac{8n^{2} }{2^{n}} \rbrace] \leq e^{-n^{2}}.
\end{align}
Therefore, if $n$ is large enough, almost all states will have a geometric measure close to $1$.
\begin{table}
\centering
\begin{tabular}[t]{l|c|c}
\hline
\hline
System & $\mathbb{E}[G]$ & $\text{Var}[G]$\\
\hline
$3$ qubits & $0.3089$ & $0.0094$\\
$4$ qubits & $0.4950$ & $0.0054$\\
$5$ qubits & $0.6534$ & $0.0023$\\
$6$ qubits & $0.7731$ & $0.0008$\\
$7$ qubits & $0.8570$ & $0.0002$\\
\hline
\hline
\end{tabular}
\caption{The first two moments of the geometric measure $G$ for small multi-qubit systems. The expected amount of entanglement increases with the number of parties and concentrates around its means, which is indicated by the decrease of the variance. \label{tab:expected_entanglement}}
\end{table}%

Further, we have analyzed how the algorithm from the main text behaves for different starting points for multi-qubit systems. For five different Haar random starting points we have chosen $300$ iterations with the same step size $\epsilon=0.05$. Within the computation of the geometric measure in each step, we used $50$ iterations and $50$ random product states. The exact behaviours can be found in Fig.~\ref{fig:randomstarts}. Here it is important to note that after the iterations each of the states has approximately the same geometric measure. This indicates that at least for the case of a small number of parties, the algorithm is stable with respect to local optima. 
\begin{figure}[t] 
    \centering
    \includegraphics[width=0.9\columnwidth]{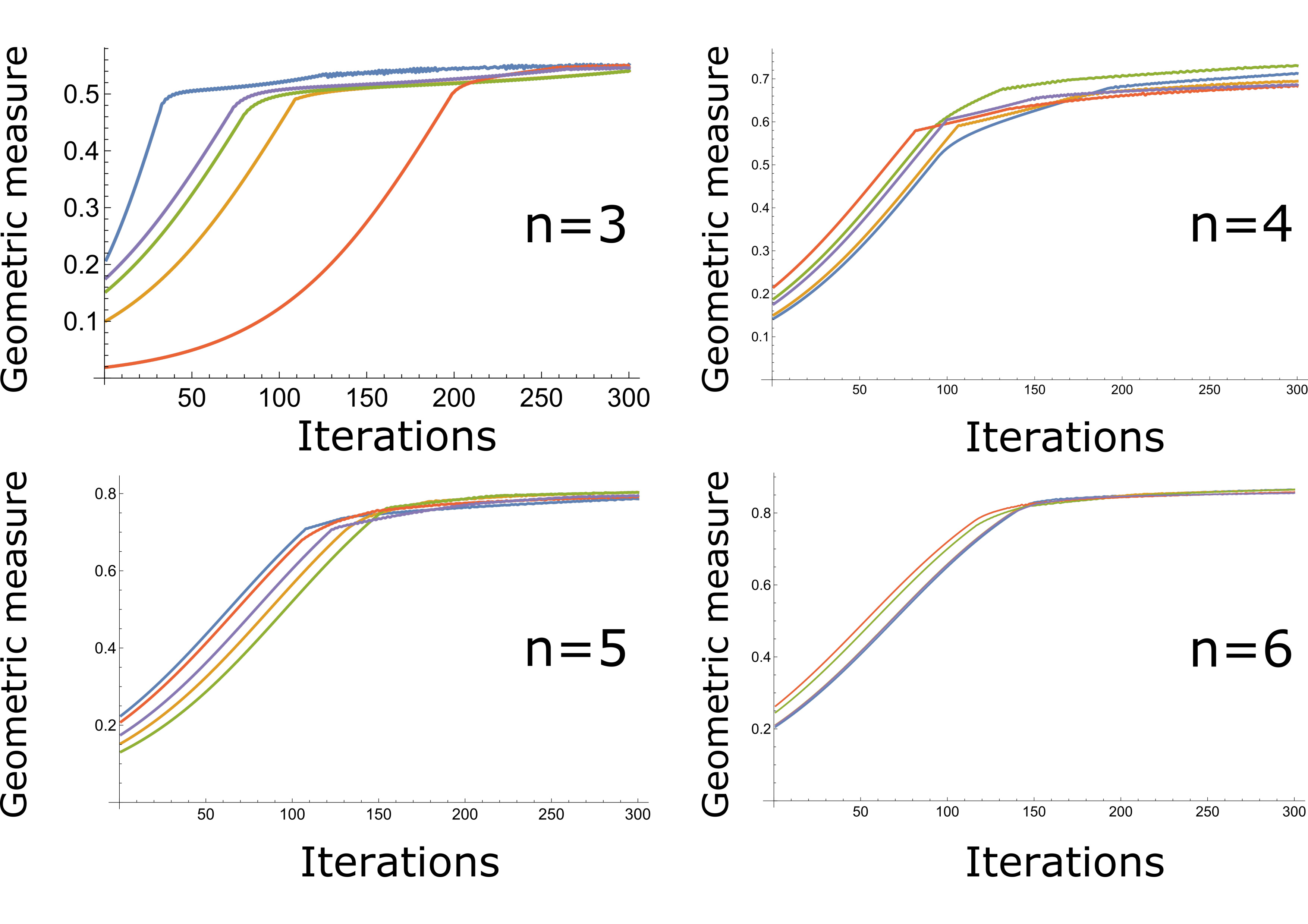}
    \caption{Trajectory of the geometric measure of the iterates of the algorithm for  different staring points and different numbers of qubits.
    While for the case $n=3$ the different trajectories differ, this deviation becomes smaller if the number of parties increase. However, it should be noted that after $300$ iterations each state displays roughly the same amount of entanglement, showing that at least for the case of $n=3,4,5,6$ the algorithm is robust w.r.t. to local optima. It can be seen that for increasing $n$ also the geometric measure of the initial state increases, as the expected geometric measure increases. Further, the different trajectories are getting more narrow with increasing $n$, reflecting the concentration property of the geometric measure.
    \label{fig:randomstarts}}
\end{figure}


\section{Known AME graph states and their graphs}
\label{app:ame_and_graphs}
In this section we present the graphs of the corresponding graph states mentioned 
in the main text and explain their construction. There are two equivalent methods
how a graph state could be defined, namely via a quantum circuit in terms of a 
sequence of commuting unitary two-qubit operations or alternatively using 
the stabilizer formalism~\cite{graph_states_hein_2004}. Here we use the 
first approach. A graph $G=(V,E)$ is a set of $\vert V \vert$ vertices and 
some edges $E \subset V\times V$ connecting them. The graph state $\vert G \rangle $ 
associated to a graph $G$ is a pure quantum state on $(\mathbb{C}^{2})^{\otimes \vert V \vert}$ 
built up of Ising type interactions according to
\begin{align}\label{eq:graphstate}
    \vert G \rangle = \!\! \prod_{(a,b) \in E} \textsf{CZ} _{a,b} \!\! \vert + \rangle^{\otimes \vert V \vert},
    \;\text{with}\;
    \textsf{CZ} _{a,b} = \sum_{k=0}^{1} \vert k \rangle \langle k \vert_{a} \otimes Z_{b}^{k},
\end{align}
where $Z_{b}$ denotes $\sigma_{z}$ only acting on system $b$ and similar 
for $\openone_{a}$. The operator $\textsf{CZ}_{a,b}$ denotes the controlled $\sigma_{z}$ on system $b$ conditioned to system $a$. 
In terms of the stabilizer formalism, the graph state $\vert G \rangle$ corresponds to the unique common eigenstate of all stabilizing operators with eigenvalue $+1$. More generally, given a graph state $\vert G \rangle$ one can generate a orthonormal basis for $(\mathbb{C}^{2})^{\otimes \vert V \vert }$ of common eigenstates, labelled by their eigenvalue signature. For given $\vert G \rangle$ we can obtain this basis by $\vert \omega \rangle = \sigma_{z}^{\omega} \vert G \rangle$ where $\omega = (\omega_{1},...,\omega_{\vert V \vert}) \in \lbrace 0,1 \rbrace^{\vert V \vert}$. The $\text{AME}(4,4)$ state can be built up out of $8$ 
qubits which are grouped together as $A= \lbrace 1,2 \rbrace, B= \lbrace 3,4 \rbrace,C= \lbrace 5,6 \rbrace,D= \lbrace 7,8 \rbrace$, while the Ising type interaction is implemented 
as described in Eq.~\eqref{eq:graphstate}. 
\begin{figure*}[t]
    \centering
    \begin{minipage}{.3\textwidth}
        \centering
        \includegraphics[width=0.75\linewidth]{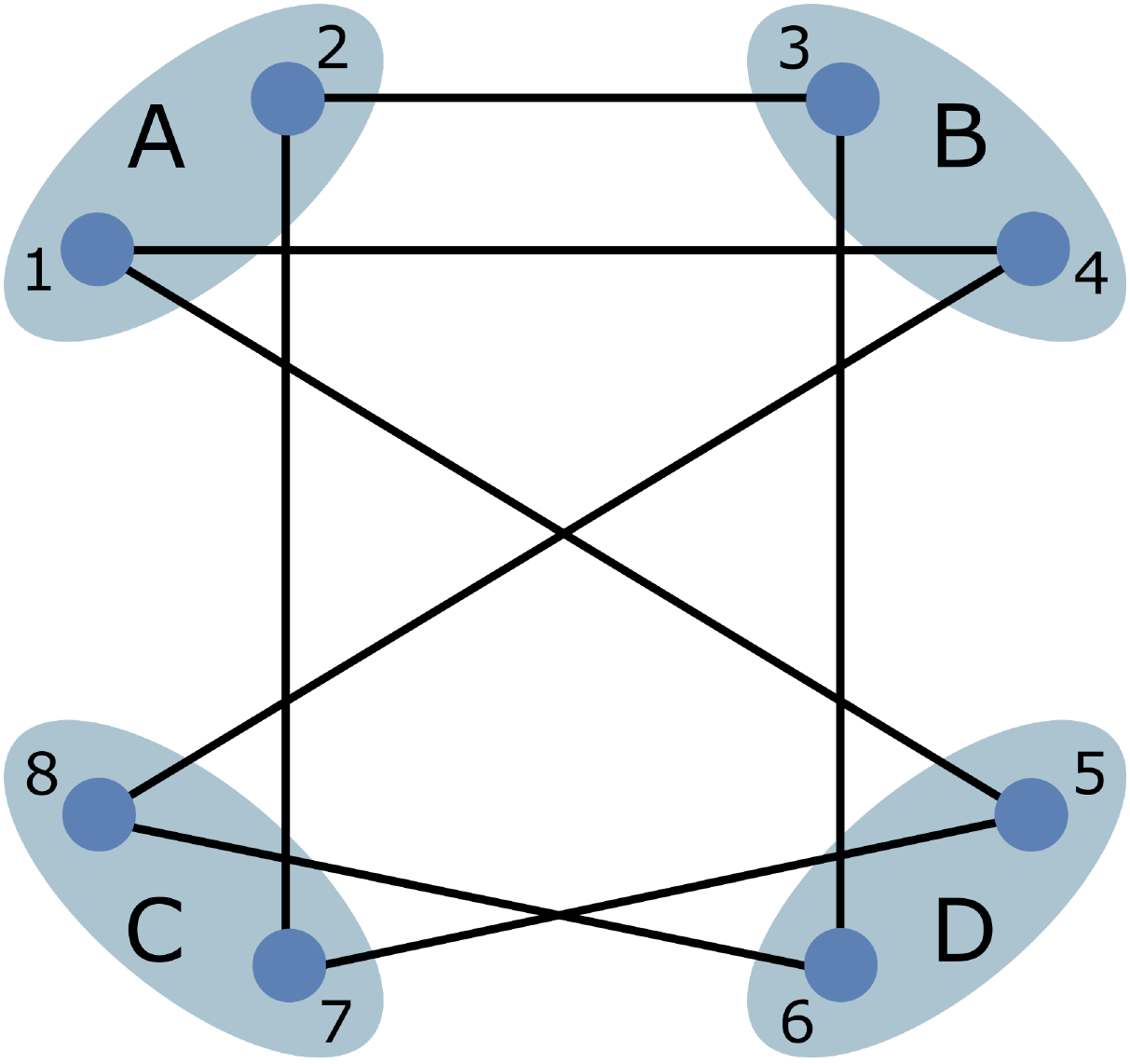}
        \caption{Graph of the known $\text{AME}(4,4)$ state~\cite{AME_qudit_helwig_2013}.}
        \label{fig:prob1_6_2}
    \end{minipage}%
     \begin{minipage}{.05\textwidth}
     \mbox{ }
      \end{minipage}
    \begin{minipage}{0.64\textwidth}
        \centering
        \includegraphics[width=\linewidth]{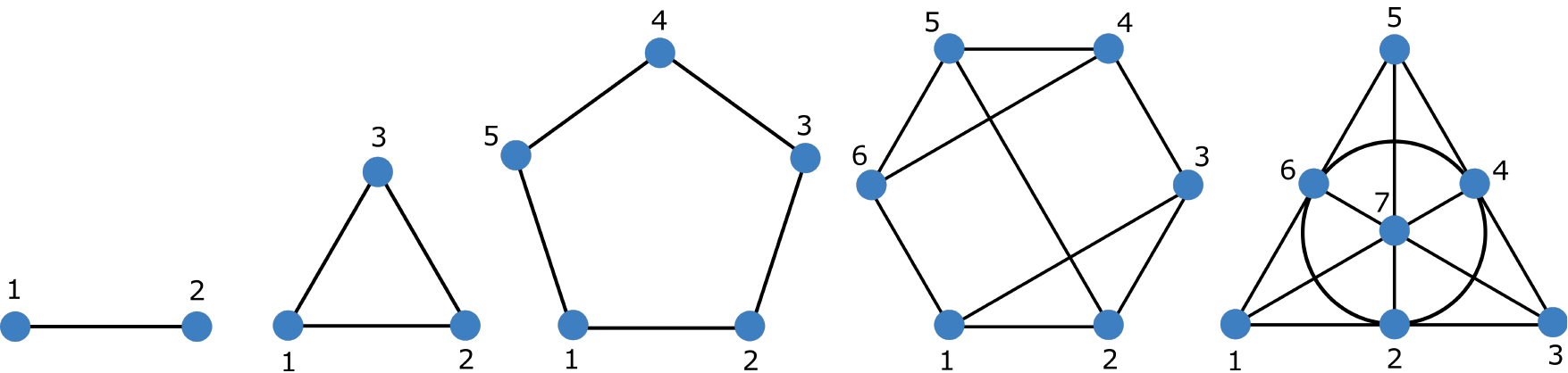}
        \caption{Highly entangled qubit graph states. The first four graphs yielding the known AME states on two/three/five/six qubits respectively. The last graph state, the Fano graph state~\cite{huber2017}, is a $2$-uniform state on seven qubits  where $32$ of its $35$ three-body marginals are maximally mixed.}
        \label{fig:prob1_6_1}
    \end{minipage}
\end{figure*}


\section{Approximation to states with fixed stabilizer rank}\label{app:stabilizer}
In this section we explain how the algorithm can be used to find states which can not be approximated well by states of a fixed stabilizer rank. For this purpose, we will first introduce the concept of stabilizer rank and explain why it is an important tool for the classical simulation of quantum computations. Afterwards, we introduce the modified algorithm and present numerical results for a slight variant of the stabilizer rank which is numerically much more tractable. 

\subsection{The stabilizer rank}
The stabilizer rank $\chi$ of a multi-qubit system was first introduced in the context of simulation of Pauli-based quantum computation (PBC), a specific computational model~\cite{stabilizer_overlap_bravyi_2016}. A PBC on $n$ qubits is given by a sequence of time steps $t=1,...,n$, where for each $t$ one performs a nondestructive eigenvalue measurement of some Pauli operator $P_{t} \in \lbrace \openone, X,Y,Z \rbrace^{\otimes n}$ and the choice of the particular Pauli operator $P_{t+1}$ can depend on the outcome of the measurement at time step $t$. Each of the $n$ qubits is initialized in a so called magic state $ \vert H \rangle = \cos(\frac{\pi}{8}) \vert 0 \rangle + \sin(\frac{\pi}{8}) \vert 1 \rangle$. After all $n$ operators $P_{1},...,P_{n}$ have been measured, yielding an outcome string $s=(s_{1},...,s_{n}) \in \lbrace \pm 1 \rbrace^{n}$, the final state is discarded. Finally, the obtained outcome string $s$ is converted via a classical post-processing into an output $b \in \lbrace 0,1 \rbrace$.

While the main aim of Ref.~\cite{stabilizer_overlap_bravyi_2016} is to quantify how many classical resources (time) are needed in order to simulate a PBC on $(n+k)$ qubits by a PBC using only $n$ qubits, their introduced method turned out to also improve the cost of a brute force full classical simulation. It can be shown that a brute force classical simulation of a $n$ qubit PBC, which includes state operations in the computational basis, has a simulation cost (time) of $\mathcal{O}(n2^{n})$, where $n$ denotes the number of qubits involved in the computation. However, one can devise an algorithm which classically simulates a $n$ qubit PBC  in only $2^{\alpha n} \text{poly}(n)$, where $\alpha \approx 0.94$~\cite{stabilizer_overlap_bravyi_2016}. The idea of the proof is to expand certain $n$-qubit states as a linear combination of stabilizer states, as $n$-qubit stabilizer states only require $\mathcal{O}(n^{2})$ classical bits to store~\cite{gottesman_thm_1998}. The minimal number of stabilizer states needed is then called the stabilizer rank.

More precisely, the set of stabilizer states consists of all states $\vert s \rangle$ that can be generated from some $n$-qubit Clifford operation $U$ via $\vert s \rangle = U \vert 0 \rangle^{n}$, where a Clifford operation is a unitary operator which maps via a conjugate action tensor products of Pauli operators to tensor products of Pauli operators, that is,
\begin{align}
    \text{Cl}_{n} := \lbrace V \in \mathcal{U}(2^{n}) \, \vert \, V^{\dagger} P V  \in \mathcal{P}_{n} \, \forall P \in \mathcal{P}_{n}  \rbrace .
\end{align}
Equivalently, one can define $\text{Cl}_{n}$ as the group generated by the gates $\textsf{H},\textsf{S},\textsf{CNOT}$. 
For an arbitrary $n$-qubit state $\vert \psi \rangle$, its stabilizer rank $\chi$ is defined as the minimum number $r \geq 1$ of stabilizer states needed to represent this state, that is, 
\begin{align}
    \vert \varphi \rangle = \sum_{k=1}^{r} \alpha_{k} \vert s_{k} \rangle.
\end{align}

\subsection{The modified algorithm and results}
So far, only for very special states the stabilizer rank is known and only rude upper bounds on the  maximal attainable $\chi$ for a $n$-qubit system have been derived. It is therefore highly desirable, from the practical viewpoint of classical simulation as well as from a theoretical perspective, to know which states cannot be approximated well by stabilizer rank $\chi$ states. For this, we define the stabilizer measure $\mathcal{S}_{k} : (\mathbb{C}^{2})^{\otimes n} \rightarrow \mathbb{R}$ with
\begin{align} \label{eq:stab_measure}
    \vert \psi \rangle \mapsto \mathcal{S}_{k} (\vert \psi \rangle) := 1 - \text{sup} \lbrace \,  \vert \langle \varphi \vert \omega \rangle \vert^{2} \, : \, \chi(\vert \omega \rangle )  =  k \rbrace .
\end{align}
For a fixed $k$, the stabilizer measure in Eq.~\eqref{eq:stab_measure} directly allows for an application of the algorithm. First, generate the set of all stabilizer states and form all subsets of size $k$. Second, draw an initial state $\vert \psi \rangle$ at random. Then, compute the closest stabilizer rank $k$ state, i.e., for given set of stabilizers $\vert s_{1} \rangle,..., \vert s_{k} \rangle$, one has to maximize the overlap of the given state $\vert \psi \rangle$ with $\vert \omega \rangle = \sum_{j=1}^{k} \alpha_{j} \vert s_{j} \rangle$. This is a constraint optimization over $k$ complex parameters $\alpha_{1},...,\alpha_{k}$ subject to the normalization of the stabilizer rank $k$ state. If the best approximation $\vert \omega \rangle$ is found, one updates $\vert \psi \rangle \mapsto (1/\mathcal{N})(\vert \psi \rangle + \theta \vert \eta \rangle)$ where $\vert \eta \rangle = (\openone - \vert \omega \rangle \langle \omega \vert) \vert \psi \rangle$. This procedure is then iterated.

\begin{figure}[b] 
    \centering
    \includegraphics[width=0.95\columnwidth]{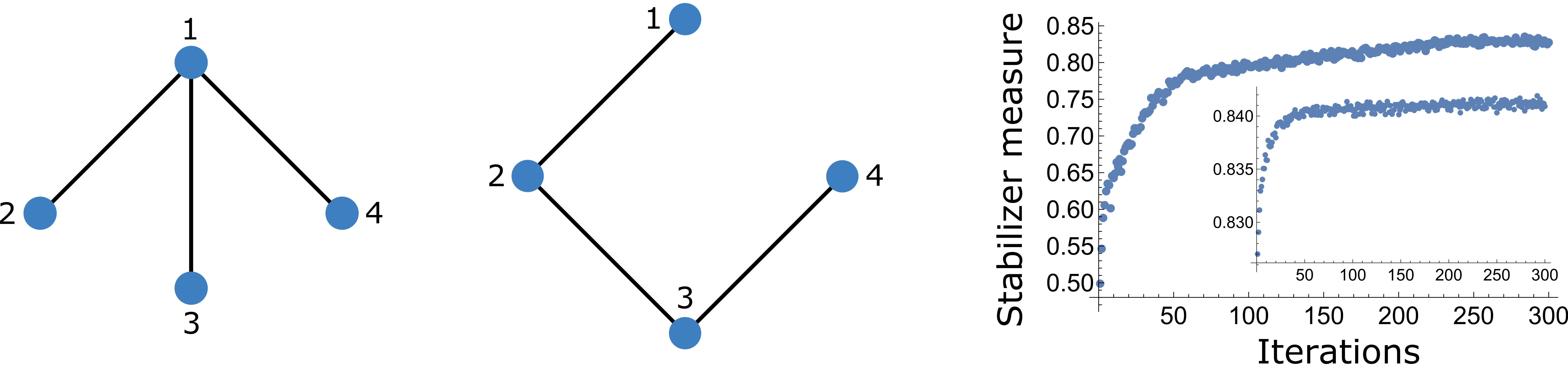}
    \caption{The graph corresponding to the four particle GHZ state (left) and the graph of the four particle cluster state (middle). Performance of the algorithm for the case of four particles (right). The optimization is implemented for stabilizer rank $\chi=2$ in the restricted model of $32$ stabilizer states, corresponding to GHZ and cluster graph state basis. We start with a random initial state and a step size of $\epsilon=0.1$ (outset). Then, we change step size to $\epsilon = 0.01$ (inset).}
    \label{fig:stabilizer}
\end{figure}

However, the number of $n$ qubit stabilizers $S(n)$ can be computed explicitly~\cite{number_of_qubit_stabilizers_gross_2006} and is given by $S(n) = 2^n \prod_{j=1}^{n} (2^{j}+1)$, which scales exponentially in the number of qubits. For instance, one has $S(3)= 1080$, $S(4) = 36720$ and $S(5) = 2423520$. Further, in order to evaluate the overlap with stabilizer rank $k$ states, one has to inspect all $\binom{S(n)}{k}$ different combination and optimize over the $k^{2}-1$ real parameter which becomes infeasible even for small $n,k$. We therefore consider a non-trivial toy model, where we restrict to a specific subset of stabilizer states. In particular, we consider the set consisting of the graph state bases  corresponding to four qubit GHZ-graph and the four-qubit cluster-graph, see Fig.~\ref{fig:stabilizer}, as for four qubits there are only two inequivalent graphs. For each graph, the set of all eigenstates with a different eigenvalue signature is a basis for $(\mathbb{C}^{2})^{\otimes 4}$ and thus our toy model comprises $32$ stabilizer states. Note that for one graph, those states form an orthonormal basis, while the set of eigenstates of two graphs displays a highly nontrivial geometric structure. If the optimization in Eq.~\ref{eq:stab_measure} only runs over this restricted set, we write $\mathcal{S}_{k}^{R}$ for the corresponding stabilizer measure. Note that in difference to the geometric measures $G_{m}$, two LU-equivalent states do not need to have the same measure $\mathcal{S}_{k}$, as we try to maximize the overlap w.r.t. a set that is not invariant under LU operations. Of course, the measure $\mathcal{S}_{k}$ of two states which are Clifford equivalent is the same.

We have implemented the algorithm for the restricted toy model, see also Fig.~\ref{fig:stabilizer} (right). We initialized with a random state and iterated $300$ times with a step size of $\theta = 0.1$ until $\mathcal{S}_{k}^{R}$ does not increase anymore. Then we change the step size to $\theta = 0.01$ and iterate again $300$ times. 
The algorithm converges to a state $\vert \psi_{\text{max}} \rangle$ yielding $\mathcal{S}_{2}^{R}(\vert \psi_{\text{max}} \rangle) \approx 0.8415$. A comparison of $\mathcal{S}_{2}^{R}$ for different four-qubit states can be found in Table~\ref{stabilizertab}.

\begin{table}[b]
\centering
\begin{tabular}[t]{lccccc}
\hline
\hline
State $ \vert \psi \rangle$ & $\vert M \rangle$ & $\vert L \rangle$ & $\vert H \rangle^{\otimes 4}$ & $\vert R \rangle^{\otimes 4}$ & $\vert \psi_{\text{max}} \rangle$ \\
\hline
Measure $\mathcal{S}_{2}^{R}$ & $1/3 \approx 0.333$ & $1/2$ & $\approx 0.4457$ & $0.4604$ &$0.8415$ \\
\hline
\hline
\end{tabular}
\caption{Results of the optimization with respect to the restricted stabilizer measure $\mathcal{S}_{2}^{R}$ for rank two. The states $\vert H \rangle$ and $\vert R \rangle$ are magic states for which it is conjectured to have the smallest possible stabilizer rank among all non stabilizer single-qubit states~\cite{stabilizer_overlap_bravyi_2016}. The state $\vert \psi_{\text{max}} \rangle$ refers to the state found by the algorithm for which $\mathcal{S}_{2}^{R}$ appears to be maximal.}
\label{stabilizertab}
\end{table}

As already mentioned, except from an asymptotic analysis for special states, not much is known about the minimal overlap which a quantum state can have with the set of all stabilizer states~\cite{upper_bounds_stabrank_gosset_2021}. For instance, for tensor products of the magic states $\vert H \rangle $ or $\vert R \rangle = \cos(\beta) \vert 0 \rangle + e^{i \pi /4} \sin(\beta) \vert 1 \rangle$ with $\beta = (1/2)\arccos(1/\sqrt{3})$, the maximum overlap with all stabilizer states $\vert \varphi \rangle$ on $n$ qubits can be bounded by $\vert \langle \varphi \vert H^{\otimes n} \rangle \vert \leq 2^{-\Omega(n)}$~\cite{stabilizer_overlap_bravyi_2016}. However, the exact constants appearing in $\Omega(n)$ remain hidden. Here our algorithm could yield states which are difficult to approximate with a fixed stabilizer rank, thus giving new insights into the structure of stabilizer states which may be helpful to find improved bounds.


\section{Approximation to states with a fixed Schmidt-rank or a fixed bond dimension}\label{app:schmidt_rank}
In this section we explain how the algorithm can be used to find states which can not be well approximated by states with a fixed Schmidt-rank or, for the particular case of matrix product states, with a fixed bond dimension. We first recapitulate the notion of Schmidt rank, tensor rank and border rank. We then proceed by introducing a slight variant of the seesaw algorithm from the main text for computing the best product state approximation, which allows for a computation of the best rank-$k$ approximation. We proceed by applying the modified algorithm to a four qubit system where we approximate with Schmidt rank $2$ and $3$. We conclude this section by discussing a further application, where the measure of interest quantifies how well a given state can be approximated by matrix product states of given bond dimension. 

\subsection{Schmidt rank, tensor rank and border rank}
Apart from the geometric measure there are also other quantifier of entanglement offering different operational interpretations. 
One such possibility is to quantify the amount of entanglement in a system by its \textit{dimensionality}. High-dimensional quantum entanglement can be exploited to tolerate larger amounts of noise in quantum communication protocols, making them a valuable resource~\cite{high_dim_entanglement_review_zeilinger_2020}.

Let $\vert \varphi \rangle$ be an $n$-particle quantum state where each constituent is a $d$-level quantum system. Any such  $\vert \varphi \rangle $ can be written as
\begin{align} \label{eq:schmidt_rank}
    \vert \varphi \rangle = \sum_{k=1}^{R} \mu_{k} \vert \pi_{k}^{(1)} \rangle \otimes \cdots \otimes \vert \pi_{k}^{(n)} \rangle,
\end{align}
where $\vert \pi_{k}^{(j)} \rangle \in \mathbb{C}^{d}$ and $\mu_{k} \in \mathbb{C}$. Note that the different summands $\vert \pi_{k}^{(1)} \rangle \otimes \cdots \otimes \vert \pi_{k}^{(n)} \rangle$ do not have to be orthogonal as in the case of the Schmidt decomposition of bipartite pure states. 
The  minimal number of terms $R$ needed in order to decompose $\vert \varphi \rangle$ into the form in Eq.~\eqref{eq:schmidt_rank}, i.e., into a sum of rank-$1$ tensors, is called the tensor rank of $\vert \varphi \rangle$ and is denoted by $\rk(\vert \varphi \rangle)$. The decomposition of $\vert \varphi \rangle$ in Eq.~\eqref{eq:schmidt_rank} is also called rank decomposition or minimal CP decomposition~\cite{tensor_decompositions_review_kolda_2009}.

The Schmidt measure $P$ of a state $\vert \psi \rangle$ is defined as $P( \vert \varphi \rangle) := \log_{2} \rk(\vert \varphi \rangle)$~\cite{schmidt_measure_eisert_2001}. Clearly, in the case of a bipartite system, the minimal number of rank-$1$ terms needed is given by the Schmidt rank of the state. Once $P$ is defined for pure states, one can extend it to the full state space of mixed quantum states in a natural way, i.e., via a convex roof construction~\cite{convex_roof_uhlmann_2000}. The Schmidt measure gives also rise to an approximation version similar to the geometric measure. For this, define the generalized geometric measure $G_{k}: (\mathbb{C}^{d})^{\otimes n} \rightarrow \mathbb{R}$  with
\begin{align}
    \vert \varphi \rangle \mapsto G_{k} (\vert \varphi \rangle) := 1- \text{sup} \lbrace \,  \vert \langle \varphi \vert \omega \rangle \vert^{2} \, : \, \rk (\vert \omega \rangle ) = k \rbrace,
\end{align}
which measures how well a given quantum state $\vert \varphi \rangle$ can be approximated by a state of tensor rank $k$.

Therefore, it gives more detailed information about the entanglement structure present in the state $\vert \varphi \rangle$, and reproduces the geometric measure $G$ for the special case $k=1$. Clearly, the generalized geometric measure is a decreasing function in the rank $k$, that is, for given $\vert \varphi \rangle$ we have $G (\vert \varphi \rangle)=G_{1} (\vert \varphi \rangle) \leq G (\vert \varphi \rangle) \leq \cdots \leq G_{\mu}(\vert \varphi \rangle)$, where $\mu$ is the maximal rank that a tensor in $(\mathbb{C}^{d})^{\otimes n}$ can have. For instance, for the bipartite case $\mathbb{C}^{d_{A}} \otimes \mathbb{C}^{d_{B}}$, the Schmidt decomposition directly leads to $\mu = \text{min} \lbrace d_{A},d_{B} \rbrace$. For the case of three qubits, it is known~\cite{rankbounds_jaja_1979} that $\mu =3$, but for the general case only not tight upper bounds exist. Obviously, if we find $G_{k}(\vert \varphi \rangle ) =0$, then $G_{\tilde{k}}(\vert \varphi \rangle) =0$ for all $\tilde{k} \geq k$. The algorithm for maximizing the measure $G_{k}$ then proceeds as follows. First, draw the initial state $\vert \varphi \rangle$ at random and compute its best rank-$k$ approximation, that is, 
\begin{align}
    \vert \omega \rangle := \text{argmin} \, \lbrace \vert \omega \rangle \, : \, 1 - \vert \langle \varphi \vert \omega \rangle \vert^{2} \, : \, \rk(\vert \omega \rangle) =k \rbrace 
\end{align}

\subsection{Computing rank-$k$ approximations}

There are different methods to compute low rank approximations to a given tensor, most prominently the CP-ALS algorithm, which updates each mode individually by a least-squares optimization while keeping all other modes fixed. For a concise summary of those methods, see Ref.~\cite{tensor_decompositions_review_kolda_2009}. However, to keep notation simple, we will generalize the algorithm mentioned in the main text for higher rank approximations, which we will illustrate for the three-particle case. For a given state $\vert \psi \rangle \in (\mathbb{C}^{d})^{\otimes 3}$ consider the Schmidt decomposition $\vert \psi \rangle_{A \vert BC} = \sum_{k=1}^{D} \lambda_{k} \vert \psi_{k}^{A} \rangle \vert \psi_{k}^{BC} \rangle$. For the overlap with a rank $k$ state given by Eq.~\eqref{eq:schmidt_rank} one finds
\begin{align}
    \langle \psi \vert_{A\vert BC} \pi \rangle = &
    \sum_{j=1}^{D} \sum_{k=1}^{r} \lambda_{j} \alpha_{k} \langle \psi_{j}^{A} \vert \pi^{(A)}_{k} \rangle \langle \psi_{j}^{AB} \vert  \pi^{(B)}_{k} \rangle \vert  \pi^{(C)}_{k} \rangle 
    \nonumber
    \\
     = & \sum_{k=1}^{r}  \langle \tilde{\gamma}_{k} \vert \pi_{k}^{(A)} \rangle,
\end{align}
where we have defined $\langle \tilde{\gamma}_{k} \vert = \sum_{j=1}^{D} \lambda_{j} \alpha_{k} \langle \psi_{j}^{AB} \vert \pi^{(B)}_{k} \rangle \vert  \pi^{(C)}_{k} \rangle \langle \psi_{j}^{A} \vert $. Note that $ \vert \tilde{\gamma}_{k} \rangle$ is not normalized and we denote by $\vert \gamma_{k} \rangle$ its normalized version. The norm is given by $\alpha'_{k} = \sqrt{\langle \tilde{\gamma}_{k} \vert \tilde{\gamma}_{k} \rangle}$. We now update the states of system $A$ as well as the coefficients $\alpha_{k}$ that appear in the decomposition as  $\vert \pi_{k}^{A} \rangle \mapsto \vert \gamma_{k} \rangle$ and $\alpha_{k} \mapsto \alpha'_{k}$. This procedure is iterated with respect to each mode and repeated many times. In order to reduce the sensitivity for local maxima, different randomly chosen starting points $\vert \pi \rangle$ should be used.

To test the reliability of the algorithm, we apply it to quantum states for which the tensor rank is already known. For instance, one has $\rk(\vert \text{GHZ} \rangle) =2$, $\rk(\vert \text{W} \rangle) =3$ and $\rk(\vert \text{W} \rangle^{\otimes 2}) =7$~\cite{tensor_rank_2W_chitambar_2010}. 
However, one should notice that for tensor approximations of higher rank ($k \geq 2$) the problem of degeneracy exists, reflecting that the set of tensors of rank $k$ is open which is in contrast to the matrix case. In particular, for the bipartite (matrix) case the Eckart-Young theorem ensures that the best rank $k$ approximation is simply given by the eigenvectors corresponding to the $k$ largest singular values. Further it is known that deflation techniques, i.e., compute best rank $1$ approximation, subtract it and then iterate, are not fruitful for tensors.

A tensor is degenerate if it may be approximated \textit{arbitrarily well} by tensors of lower rank. Also well known~\cite{hardness_approximation_2008,tensor_decompositions_review_kolda_2009}, we will discuss this phenomenon for the case of the W state. The W state $\vert \text{W} \rangle = (1/\sqrt{3})( \vert 001 \rangle + \vert 010 \rangle + \vert 100 \rangle)$ has tensor rank $3$, but can be approximated arbitrarily well by tensors of tensor rank $2$. Consider the family of rank $2$ states
\begin{align}
    \vert \pi(\alpha) \rangle = & \frac{\alpha}{N_{\alpha}} (\vert 0 \rangle + \alpha^{-1} \vert 1 \rangle)^{\otimes 3} - \frac{\alpha}{N_{\alpha}} \vert 0 \rangle^{\otimes 3} 
    \nonumber \\ &\text{with} \, \, N_{\alpha} = \sqrt{3 + 3 \alpha^{-2} + \alpha^{-4}},
\end{align}
where $N_{\alpha}$ assures that $\langle \pi(\alpha) \vert \pi (\alpha) \rangle =1$ for all $\alpha >0$. It directly follows that $\langle \text{W} \vert \pi(\alpha) \rangle = \sqrt{3}/\sqrt{3 + 3x^{-2} + x^{-4}}$ which tends to $1$ for $x \rightarrow \infty$. To emphasize that problem, one also says that the W state has border rank $2$. Therefore, it is not surprising that our approximation algorithm yields that $G_{2}(\vert \text{W} \rangle)$ is numerically $0$. 
As it turns out, the generalized geometric measure $G_{k}$ has also the advantage that it distinguishes more clearly between different forms of entanglement, e.g., the L state and the M state. In particular, one finds that $G_{3}(\vert \text{L} \rangle) =0$ while $G_{3}(\vert \text{M} \rangle) \approx 0.2626$. 

\subsection{The modified algorithm and results}

The smallest case where a maximization of $G_{2}$ could be considered is in principle $(\mathbb{C}^{2})^{\otimes 3}$. Here it is known that there are only two classes of genuine multipartite entanglement with respect to SLOCC, namely GHZ and W~\cite{three_qubits_ineuivalent_entangled_duer_2000}. Further, the tensor rank is monotonically decreasing under SLOCC operations~\cite{schmidt_measure_eisert_2001}. In particular, one can show that the set of all tensors of rank three is the closure of the SLOCC orbit of the W state. However, the W state has border rank two and thus all states within his orbit. This implies the nonexistence of border rank three tensors in  $(\mathbb{C}^{2})^{\otimes 3}$. As a consequence, the algorithm cannot be applied to this case, as in each step the given state can be approximated arbitrarily well by states of rank $2$.
Therefore the first nontrivial case is a system of four qubits with respect to rank two and three. Here, states of border rank three and four exist and are subsets of non-vanishing measure. We implement the algorithm with $20$ different random initial points for each case. For the case of rank $2$, we identify the state to be in $7$ of the $20$ runs the M state and in the other cases the four qubit cluster state. Further, the numerical computation of $G_{2}$ for $\vert C_{4} \rangle$ is stable with respect to a randomization of the starting point and one finds $G_{2}(\vert C_{4} \rangle) = 1/2$.

\begin{figure}[t]
    \centering
    \includegraphics[width=0.9\columnwidth]{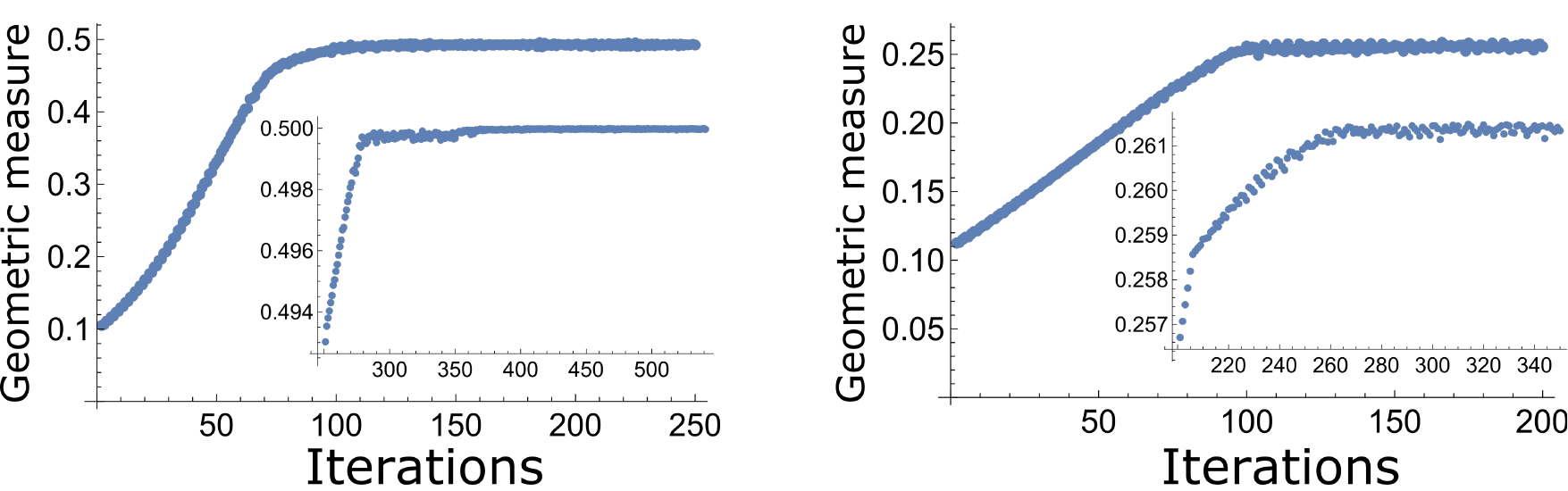}
    \caption{Performance of the algorithm for the case of four qubits and different generalized geometric measures. The left figure shows the maximization of the distance to the set of tensor rank $2$ states, that is, the measure $G_{2}$. We start with a random state and choose a step size of $\epsilon = 0.05$ (main). We compute the rank $2$ best approximation with $100$ iterations and $100$ random initial points. After $150$ iterations, there is no increase in the measure and the values start to fluctuate. We then change the step size to $\epsilon = 0.001$ and compute the best rank $2$ approximation with $200$ iterations and $200$ random initial points (inset). The right figure shows the maximization of the distance to the set of tensor rank $3$ states, that is, the measure $G_{3}$. Similar, we start  with a random state and choose a step size of $\epsilon = 0.05$ (main) and compute the best rank $3$ approximation with $100$ iterations and $100$ random initial points. Then, we change the step size to $\epsilon = 0.001$ and compute the best rank $3$ approximation with $200$ iterations and $200$ random initial points (inset).}
    \label{fig:higher_rank}
\end{figure}

In this case, the rank two approximation found by the algorithm yielding the value $G_{2} (\vert C_{4} \rangle) = 1/2$ can proven to be optimal. The cluster state is a graph state and we can consider the bipartition $23 \vert 14$ (see Fig.~\ref{fig:stabilizer}). This effectively yields the four-dimensional maximally entangled state $\vert 00 \rangle + \vert 11 \rangle +\vert 22 \rangle +\vert 33 \rangle$. Because all singular values coincide, we can choose the best rank two approximation as $\vert 00 \rangle + \vert 11 \rangle$ which yields a squared overlap of $1/2$. As $G_{m} (\vert \psi \rangle_{1234}) \geq G_{m} (\vert \psi \rangle_{12 \vert 34})$ this proves the optimality. 
However, as the M state is not a graph state the situation is different. For the computation we choose $10^{6}$ random starting points and make $300$ iteration for each. The total computation was running for $27$ hours. The optimal approximation found by the algorithm yields $G_{2} (\vert \text{M} \rangle) = 5.000003(9)$.
This suggests that $G_{2} (\vert \text{M} \rangle) = 1/2$ which would imply that the maximizer of $G_{2}$ is not unique anymore in contrast to $G_{1}$. For the case of rank $3$ approximations, the algorithm converges for all $20$ starting points to the M state. With a similar computation as for $G_{2}$ we find that $G_{3}(\vert \text{M} \rangle) = 0.2626050$. However, the algorithm can not be applied to  maximize $G_{k}$ for $ \geq 4$. This is due to the fact, that randomly drawn tensors in $ (\mathbb{C}^{2})^{\otimes 4}$ can be approximated well by rank $4$ tensors.

\subsection{Approximation to matrix product states}
In general, it is an open question which forms of nontrivial quantum dynamics can be simulated efficiently by classical means. However, it is a celebrated result that if the computation only involves pure states containing a restricted amount of entanglement, such an efficient simulation is possible~\cite{matrix_product_states_vidal_2003}. The power of those simulation algorithms rely on the fact that if the entanglement present in the multi-qudit state $\vert \psi \rangle $ is bounded, a low-dimensional sparse representation of $\vert \psi \rangle$ can be derived. This is the so-called matrix product state (MPS) representation.  A MPS of bond dimension $\chi$ can be written as
\begin{align}
    \vert \psi \rangle = \sum_{i_{1},...,i_{n} =1}^{d} \, \text{tr}[A^{[1]i_{1}} \cdot ... \cdot A^{[n]i_{n}}] \, \vert i_{1} \cdots i_{n} \rangle
\end{align}
where $A^{[l] i_{l}} \in \text{Mat}_{\chi}(\mathbb{C})$ for $l=1,...,n$. Any state $\vert \varphi \rangle $ can be represented as a MPS if the bond dimension is sufficiently large~\cite{matrix_product_states_vidal_2003}. For a fixed number of parties $n$ with local dimension $d$ we write $\text{MPS}(\chi)$ for the set of all MPS with bond dimension $\chi$. Notice that $\text{MPS}(\chi)$ is a manifold in state space, which coincides with the set of product states for $\chi =1$ and has a nested structure, i.e., $\text{MPS}(\chi) \subset \text{MPS}(\chi+1)$.
In particular, this shows that computing the best MPS approximation of given bond dimension $\chi$ to a given state $\vert \varphi \rangle$ is a hard problem which includes for $\chi=1$ a NP-hard problem~\cite{hardness_approximation_2008,tensor_problems_hard_2013}.

Recently, a variant of the geometric measure was introduced, where the distance is measured with respect to the set of matrix product states of a given bond dimension $\chi$~\cite{matrix_product_measure_nicokatz_2023}. How well a generic multi-particle quantum state can be approximated by matrix product states of bond dimension $\chi$ is then quantified by $\mathcal{E}_{k} : (\mathbb{C}^{d})^{\otimes n} \rightarrow \mathbb{R}$ with 
\begin{align}
   \vert \psi \rangle \mapsto  \mathcal{E}_{k} (\vert \psi \rangle):=  1- \text{sup} \lbrace \vert \langle \psi \vert \omega \rangle \vert^{2} \, : \, \vert \omega\rangle \in \text{MPS}(\chi) \rbrace
\end{align}

As already pointed out in Ref.~\cite{matrix_product_states_vidal_2003}, the tensor rank is not a continuous function. Consequently, there exist states $\vert \psi \rangle$ that need a high bond dimension to be represented exactly, but very good approximations w.r.t. some norm exist where the optimizer has a significantly lower bond dimension. This property could then be used to obtain efficient simulations with small errors, even for \textit{seemingly} high entangled states. Therefore, finding states which are difficult to approximate by states with fixed bond dimension is important as they do not allow for an efficient classical approximate simulation and can thus be regarded as genuine quantum resources. 

Further, upper bounds on the distance between an arbitrary $n$-particle state $\vert \psi \rangle$ and the set of $\text{MPS}(\chi)$ can be derived~\cite{mps_faithful_verstraete_2006}. Indeed, one can show that for any $\vert \omega \rangle \in \text{MPS}(\chi)$ one has  
\begin{align}\label{eq:mps_upperbound}
    \vert \vert \, \vert \psi \rangle - \vert \omega \rangle \, \vert \vert^{2} \leq 2 \sum_{k=1}^{n-1} \epsilon_{k}(\chi)
\end{align}
with $\epsilon_{k}(\chi) := \sum_{j=k+1}^{d_{k}} \lambda^{[k]}_{j}$, where $\lambda^{[k]}_{j}$ are the Schmidt coefficients of the bipartition $[1...k]\vert [k+1...n]$ and $d_{k}$ is the local dimension of the smaller of the two subsystems for that bipartition. Here our algorithm can find states for which the bound in Eq.~\eqref{eq:mps_upperbound} will be maximal and thus gives information about its tightness.


\section{Approximation to independent triangle preparable states}\label{app:triangle}

\subsection{The triangle scenario}

A class of quantum states with importance for quantum information processing is the set of states which can be prepared in a quantum network~\cite{quantum_internet_kimble_2008,qrepeaters_gisin_2011}. For simplicity, in the following we will restrict to the triangle network and assume that the independent sources distribute pairs of qubit systems. Note that our algorithm can also be applied to more advanced network topologies as well as to higher local dimensions. Following Ref.~\cite{ITN_kraft_2021}, we call this network the independent triangle network, abbreviated ITN. From the structure of the network, see also Fig.~\ref{fig:triangle} (middle), we obtain that a state $\vert \psi \rangle$ can be prepared in the ITN if and only if there exist unitaries $U_{A},U_{B},U_{C}$ and bipartite states $\vert a \rangle, \vert b \rangle, \vert c \rangle$ such that
\begin{align}
    \vert \psi \rangle = U_{A} \otimes U_{B} \otimes U_{C} \vert abc \rangle
\end{align}
Here it is important to note that the order of the subsystems is different for the unitaries and the states. For instance, $U_{A}$ acts on the joint system $A_{1}A_{2}$ while $\vert a \rangle$ defines the joint state between $B_{2}C_{1}$. We denote the set of all states that can be prepared by means of two-qubit sources by $\Delta_{I}$. However, it turns out that $\Delta_{I}$ admits a highly nontrivial structure~\cite{ITN_kraft_2021}. Indeed, $\Delta_{I}$ is not convex, certain separable (product) states are not contained and it is a subset of measure zero within the entire set of quantum states. This renders its analysis and characterisation difficult and not much is known about the structure of this set.

\begin{figure}[t]
    \centering
    \includegraphics[width=0.9\columnwidth]{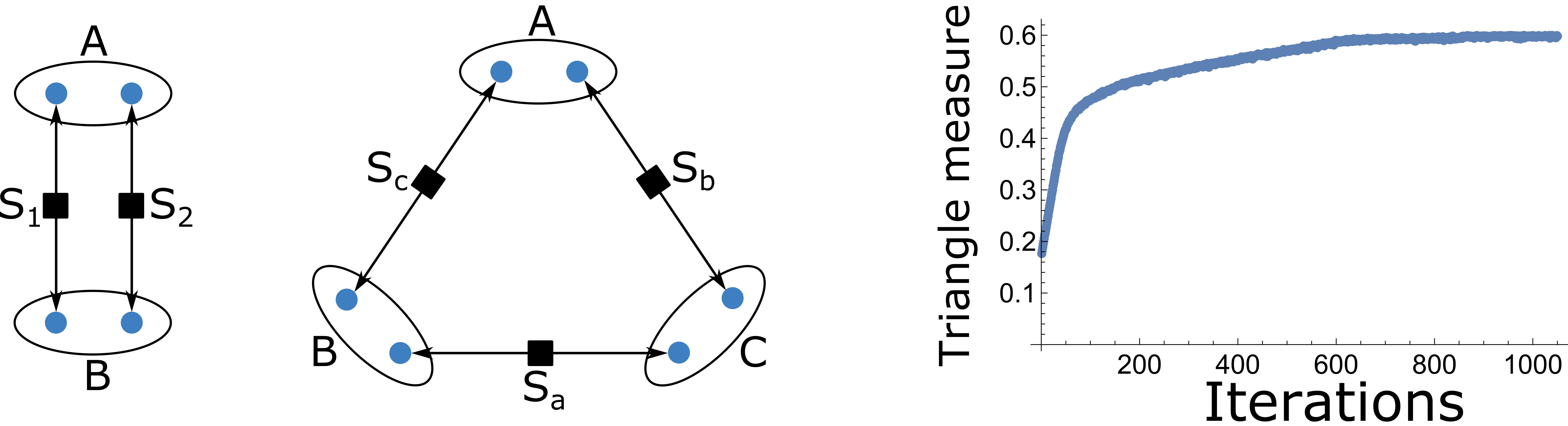}
    \caption{The genuine multilevel scenario (left). Each of the independent sources $S_{1}$ and $S_{2}$ generates a bipartite entangled system and sends one part to A and the other to B. Consequently, each of the parties receives two physical systems on which a joint unitary transformation can be performed. The set of all states that can be obtained in this scenario is abbreviated by $\Delta_{\text{GM}}$. The independent triangle scenario (middle). Three statistically independent sources $S_{a},S_{b}, S_{c}$, distributing physical systems to the parties A,B,C. Upon receiving the two independent systems, each party can perform a joined unitary on the hold particles. The set of all states that can be obtained in this scenario is abbreviated by $\Delta_{I}$. The performance of the algorithm for the maximization of the triangle measure $\mathcal{T}$ (right).   }
    \label{fig:triangle}
\end{figure}

As triangle states constitute an important resource, it is an interesting question how well a given quantum state can be approximated by states from $\Delta_{I}$. In order to quantify this property, we define the triangle measure $\mathcal{T}:  (\mathbb{C}^{d})^{\otimes 6} \rightarrow \mathbb{R}$ with
\begin{align}
    \vert \psi \rangle \mapsto \mathcal{T}(\vert \psi \rangle) := 1 - \text{sup} \, \lbrace \vert \langle \psi \vert \omega \rangle \vert^{2} \, : \, \vert \omega \rangle \in \Delta_{I} \rbrace 
\end{align}
Clearly, for any ITN preparable state we have $\mathcal{T}=0$. 

\subsection{Computing the best network state approximation}

Similar to the problem of finding the best rank one or best rank $k$ approximation to a given state in the computation of the generalized geometric measures $G_{k}$, one can also devise a seesaw type algorithm for computing the ITN state which maximizes the overlap~\cite{ITN_kraft_2021}. We will shortly recapitulate this algorithm. For simplicity, let us assume that the given target state $\vert \psi \rangle$ is composed out of six qubits. We initialize the algorithm with a unitary for each party $U_{A},U_{B},U_{C}$  and a state for each source $\vert a \rangle$,$ \vert b \rangle$,$ \vert c \rangle$. Now, if we keep the unitaries fixed, finding a better choice for the source states can be seen as finding the best rank one approximation with respect to a $4 \times 4 \times 4$ system. After updating the source states, we have to optimize for the unitaries. To compute the optimal choice for $U_{A}$, define the new state $\vert \tilde{\psi} \rangle = \openone_{A} \otimes U_{B} \otimes U_{C} \vert \psi \rangle$. Then one has~\cite{ITN_kraft_2021}
\begin{align}
    \underset{U_{A}}{\text{max}} \, \vert \text{Tr}[U_{A}  \otimes \openone_{B} \otimes \openone_{C} \vert \tilde{\psi} \rangle \langle abc \vert ] \vert = \underset{U_{A}}{\text{max}} \,  \vert \text{Tr}_{A} [U_{A} \rho_{A}] \vert
\end{align}
with $\rho_{A} = \text{Tr}_{BC}[\vert \tilde{\psi} \rangle \langle abc \vert ]$. From the singular value decomposition of $\rho_{A} = UDV^{\dagger}$ one can then derive the optimal form of $U_{A}$ as $U_{A} = VU^{\dagger}$. Alternating this two optimizations multiple times gives a good approximation to the optimal state. However, it should  be noted that similar to the rank one approximation routine, this algorithm is prone to local maxima. In order to stabilize the optimal solution one should use multiple random initial choices. 

\subsection{Modified algorithm and results}
In order to maximize the measure $\mathcal{T}$ we initially choose a random starting point $\vert \psi \rangle$. We use the seesaw algorithm to compute the best network state approximation, called $\vert \omega \rangle$. Then we update the state according to $\vert \psi \rangle \mapsto (1/ \mathcal{N}) (\vert \psi \rangle + \theta \vert \eta \rangle)$ with $\vert \eta \rangle = (\openone - \vert \omega \rangle \langle \omega \vert) \vert \psi \rangle$ and $\mathcal{N}\geq 1$ a normalization. This procedure is iterated.

In the simplified case of only two parties and two independent sources, the problem of triangle preparability reduces to the multilevel entanglement problem~\cite{mutlilevel_entanglement_kraft_2018}, see Fig.~\ref{fig:triangle} (left). In that case, the states which have the smallest overlap among the set of all preparable states are known and can be analytically derived. In the case of four qubits, the state
\begin{align}
    \vert \xi \rangle = \sqrt{\frac{3}{4}} \vert 00 \rangle + \frac{1}{2 \sqrt{3}} (\vert 11 \rangle + \vert 22 \rangle + \vert 33 \rangle)
\end{align}
has the largest distance to the set of decomposable states. Indeed, this state is found by the algorithm with very high fidelity.

For the ITN scenario, we choose $20$ random starting points and a step size of $\theta = 0.1$. Further, to compute the best network state approximation, i.e, the optimal unitaries and source states, we  choose $100$ iterations in the seesaw routine and $500$ different initial points.  This maximization procedure is iterated $200$ times. Then, the step size is changed to $\theta = 0.01, 0.001,0.0001$ while we also increase the precision of the computation of the approximation. While the seesaw routine always makes $100$ iterations, the number of random starting points was $1000,1500,3000$ respectively. For each of the choices, the maximization was done for $200$ steps. After $800$ optimization steps in total there was no increase in the measure $\mathcal{T}$ and the fluctuations were of the same size as those arising in the computation of the best network state approximation. 
The convergence of the algorithm is illustrated in Fig.~\ref{fig:triangle} (right). The state found by the algorithm, denoted by $\vert \psi_{\text{max}} \rangle$, yields a triangle measure $\mathcal{T}$ very close to $0.6$, see Tab.~\ref{tab:triangle}. This is very interesting as this state outperforms all states that have a high triangle measure known so far~\cite{ITN_kraft_2021}. Further, the state shares an interesting entanglement structure. One finds that while the one-body marginals onto the first five systems have all the same spectrum $(0.4,0.6)$, the marginal onto system six is approximately pure, hence unentangled with the remaining particles.

\begin{table*}[ht]
\centering
\begin{tabular}[t]{lcccccccc}
\hline
\hline
State $\vert \psi \rangle$ & $\text{GHZ}_{2}$ & $\text{GHZ}_{3}$ & $\text{GHZ}_{4}$ & $\text{W}$ & $\text{AME}(3,4)$ & $\vert \psi_{3,4} \rangle $ & $\text{AS}_{3}$ & $\vert \psi \rangle_{\text{max}}$\\
\hline
Measure $\mathcal{T}$ & $\tfrac{1}{2}$ & $\tfrac{5}{9} \approx 0.555$ & $\frac{1}{2}$ & $\frac{1}{3} \approx 0.333$ & $\frac{1}{2}$ & $0$ & $\approx 0.4638$ & $0.6$ \\
\hline
\hline
\end{tabular}
\caption{Results of the seesaw optimization yielding upper bounds on the triangle measure $\mathcal{T}$. The states $\text{AME}(3,4)$ and $\vert \psi_{3,4} \rangle$ are defined in the main text. $\text{AS}_{3}$ is the totally antisymmetric state on three qutrits, embedded into the ququad system. $\vert \psi_{\text{max}} \rangle$ refers to the state found by the algorithm.}
\label{tab:triangle}
\end{table*}



\section{Performance of the algorithm for the geometric measure}
\label{app:performance}
In the following we explain our choice of the step-size $\theta$, the number 
of iterations and the duration of the computation. First, it should be noticed 
that the algorithm contains the computation of the closest product state as a
subroutine. Because the used seesaw is prone to local maxima, we randomize the 
algorithm, i.e., we run the iteration for many different initial states. The 
number of iterations as well as the number of initial states depends on the
number of parties and the local dimension. The iteration typically converges 
fast, e.g. for three qubits $10$ iterations are sufficient and for five ququads 
$30$ iterations. The number of initial states can be chosen small if the system size is small, e.g., for three qubits $10$ different initial points make the 
largest overlap robust, while for larger systems more initial states are 
necessary, e.g., for five ququads $100$ points have to be taken. 

The step-size $\theta$ used in the update rule Eq.~\eqref{app:nice_update} 
depends on the size of the system and on the variation of the measure of the 
iterates. For systems of small and moderate size, we initially choose 
$\theta=0.01$. After a certain number of iterations (mostly around $400$) 
the measure of the iterates is not increasing anymore, but fluctuates around 
a certain value where the amount of fluctuation depends on the step-size. 
In this case, the step-size is reduced via $\theta = \theta/2$ and one 
proceeds with the new step-size. However, if $\theta$ becomes small 
it is also useful to improve the precision in the computation of the best
product state approximation. 

\begin{figure}[b]
    \centering
    \includegraphics[width=0.9 \columnwidth]{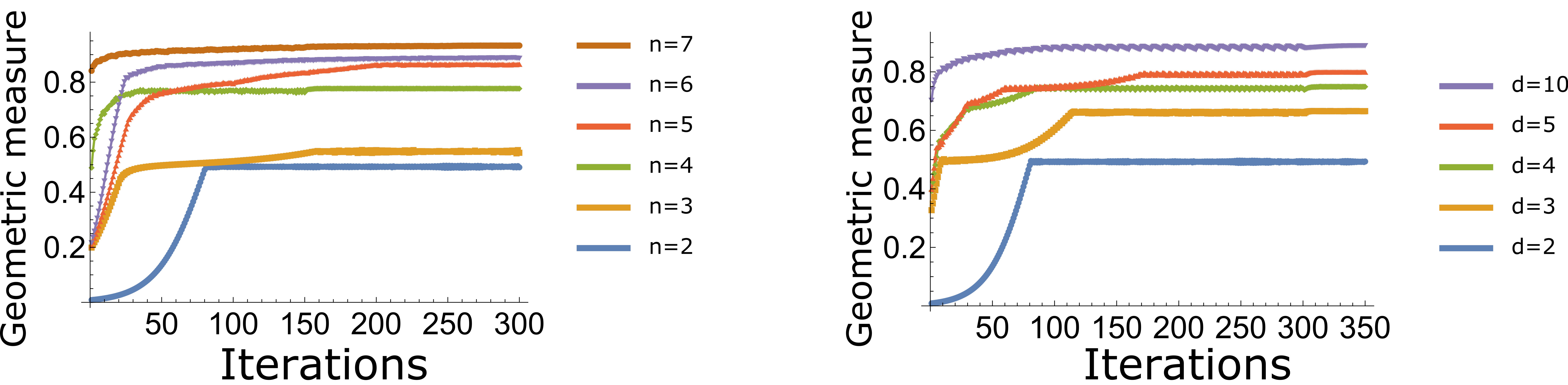}
    \caption{Performance of the algorithm for multi-qubit systems (left). Initializing with a random state, for each iteration the geometric measure $G$ is computed. The step-size $\theta$ depends on the size of the system and we have
    $\theta_{n=2}=\theta_{n=3} = \theta_{n=4} =  0.01$ and $\theta_{n=5} = \theta_{n=6} = 0.06$. For the 
    case $n=3$, the slope starts to decrease when a measure of $G \approx 0.5$ 
    is reached, resulting from the fact that the GHZ state is an exceptional point 
    of the function and yields $G(\vert \text{GHZ} \rangle) =0.5$. A similar 
    behaviour can be observed in the case $n=4$ for the $\vert \text{M} \rangle $ and 
    $\vert \text{L} \rangle $ state. Convergence of the algorithm for bipartite systems of different local dimension $d \in \lbrace 2,3,4,5,10 \rbrace$ (right). For local dimension $d \leq 5$ we have chosen the step-size  as $\theta=0.01$. For $d>5$ the step-size was chosen as $\theta=0.1$. After $350$ iterations, the iterates had a very high fidelity with the $d$-dimensional maximally entangled state.}
    \label{fig:performance_algorithm}
\end{figure}


Because the update rule resembles the idea of a gradient descent (GD), we can 
make use of advanced gradient descent techniques as the cumulated momentum method 
(CM)~\cite{gradient_polyak_1964} or Nesterov's accelerated gradient 
(NAG)~\cite{nesterov1983}. In the general setting one aims to minimize 
a (typically complicated and high-dimensional) real-valued function $f(\omega)$. 
Suppose that one initializes the parameters to $\omega_{0}$. In the GD, 
the parameters are iteratively updated according to 
$\omega_{t+1} = \omega_{t} - \kappa_{t}$, where $\kappa_{t} = \theta_{t} (\nabla_{\omega} f) (\omega_t)$ 
with step-size $\theta_{t} >0$, and $(\nabla_{\omega}f)(\omega_{t})$ is the gradient of $f$ 
evaluated at $\omega_{t}$. The idea behind CM is to keep track about the directions we are moving 
in parameter space. For a given momentum parameter $\gamma \in [0,1]$ the update is given 
by $\kappa_{t} = \gamma \kappa_{t-1} + \theta_{t} (\nabla_{\omega}f) (\omega_{t})$. 
Clearly, $\kappa_{t}$ is a running average of previously calculated gradients. 
The advantage of using CM is that the momentum enhances the decrease in directions with small 
gradients while suppressing oscillations in high-curvature directions. 

For NAG, the idea is to not calculate the gradient with respect to the current parameter 
$\omega_{t}$, that is $(\nabla_{\omega}f)(\omega_{t})$, but at the expected value of the 
parameters given the recent momentum, that is, $(\nabla_{\omega}f)(\omega_{t}) + \gamma \kappa_{t-1}$. 
Accordingly, the NAG update rule is given by $\omega_{t+1} = \omega_{t} - \kappa_{t}$ where $\kappa_{t}= \gamma 
\kappa_{t-1} + \theta_{t} (\nabla_{\omega}f)(\omega_{t} + \gamma \kappa_{t-1})$.

The update rule in Eq.~\eqref{app:nice_update} naturally invites for two different variants. 
First, as introduced in the main text, the update direction 
$\vert \eta \rangle = (1/\sqrt{\langle \psi \vert \Pi \vert \psi \rangle}) \Pi \vert \psi \rangle$ 
can be re-normalized. This has the consequence that the amount of the shift stays 
constant for all iterations. However, a second option would be to take 
just the projected (un-normalized) state $\Pi \vert \psi \rangle$ as the update. 
Here, the size of the shift changes within each iteration, depending on how large 
the state $\vert \psi \rangle$ is supported within the subspace $\text{im}(\Pi)$.


\section{Unitary optimization} \label{app:uniopt}
After a sufficient number of iterations, the algorithm yields a state given by 
coordinates with respect to a random basis. Hence, generically each component 
of the tensor is nonzero. 
However, in order to understand the structure of the state, we seek for a concise 
representation in which most of the coefficients vanish. As we consider two states 
to be equal if there is a LU-transformation connecting them, this requires a parametrization 
of the set of unitary matrices. First notice, that $\text{U}(d)$ is the semidirect product 
of $\text{U}(1)$ with $\SU(d)$ and hence we can restrict to parametrizations of $\SU(d)$. For 
qubits, one can make use of the fact that $\SU(2)$ is diffeomorphic to the $3$-sphere 
$S^{3}$. In particular, an arbitrary $\SU(2)$ matrix can be written as 
\begin{align}
    U = 
    \begin{pmatrix}
    \alpha & - \beta^{*} \\
    \beta & \alpha^{*}
    \end{pmatrix},
    \quad  \alpha, \beta \in \mathbb{C} \, \, \text{with} \, \, \vert \alpha \vert^{2} + \vert \beta \vert^{2} =1.
\end{align}
Consequently, the parametrization involves four real parameters and one quadratic constraint. 
However, for $d \geq 3$ the $\SU(d)$ is not homeomorphic to a sphere anymore, or a product of
them, e.g., $\SU(3)$ is \textit{not} homeomorphic to $S^{8}$. Consequently, for systems of
higher dimensions different approaches exist \cite{tilma2002,jarlskog2005,spengler2010}. 
In our work, we have used the Jarlskog parametrization~\cite{jarlskog2005}, what is a 
simple recursive scheme for parametrization, which can be easily implemented 
numerically. First, notice that any $X \in \text{U}(d)$ can be written as 
$X = \Phi_{\alpha} Y \Phi_{\beta}$ where 
$\Phi_{\alpha} = \text{diag}(e^{i \alpha_{1}},...,e^{i \alpha_{d}})$, $\Phi_{\beta}$ 
similar and $Y$ a unitary $d\times d$ matrix. Now, $Y$ is decomposed into a product of unitaries, that is, 
$Y = \prod_{k=2}^{d} A_{d,k} $ with 
\begin{align}
  &  A_{d,k} = 
    \begin{pmatrix}
    A^{(k)} & 0 \\
    0 & \openone_{d-k}
    \end{pmatrix},
    \nonumber
    \\
    &
    A^{(k)} =
    \begin{pmatrix}
    \openone_{d-1} - (1-\cos(\theta_{k})) \vert a_{k} \rangle \langle a_{k} \vert & \sin(\theta_{k}) \vert a_{k} 
    \rangle \\
    - \sin(\theta_{k}) \langle a_{k} \vert & \cos(\theta_{k})
    \end{pmatrix},
\end{align}
where $ A^{(k)} \in \text{U}(d)$, and  $\vert a_{k} \rangle \in \mathbb{C}^{d-1}$ normalized to one, i.e., 
$\langle a_{k} \vert a_{k} \rangle =1$ and $\theta_{k} \in [0,2\pi)$ an arbitrary angle. 

We now describe how this parametrization can be used to bring the numerically found 
states into a concise form. Here, two different cases can be considered. If one has 
a guess for the possible state, e.g., the marginals are all maximally mixed so one 
expects an AME or $k$-uniform state, one could compute the fidelity between the numerical state  
$\vert \psi \rangle$ and the guess $\vert \varphi_{\text{guess}}\rangle$, 
i.e., $\text{sup} \vert \langle \psi \vert U_{1} \otimes \cdots \otimes U_{n} \vert \varphi_{\text{guess}} \rangle \vert$. 
If there is no possible candidate, the idea is to minimize a function 
$f: \text{U}(d) \times \cdots \times \text{U}(d) \rightarrow \mathbb{R}$ 
depending on the state, which becomes minimal if many entries of the state vanish. 
For instance, given the state $\vert \psi \rangle$ a natural candidate 
would be $f(U_{1},...,U_{n}) = \sum \vert  ( U_{1} \otimes \cdots \otimes U_{n} \vert \psi \rangle)_{i_{1},...,i_{n}} \vert $. Given two states $\vert \phi \rangle, \vert \psi \rangle$ we regard them as equal, if $\mathcal{F} (\vert \psi \rangle, \vert \phi \rangle) \geq 1-\epsilon$ with $\epsilon <10^{-6}$, where $\mathcal{F}$ denotes the fidelity.


\section{Relation to known upper bounds}\label{app:upper_bounds}
In this section we discuss fundamental upper bounds on the geometric measure. More precisely, for a given physical system $\mathcal{H} = \otimes_{k} \mathcal{H}_{k}$ with $\mathcal{H}_{k} \cong \mathbb{C}^{d_{k}}$ one can assign the following value
\begin{align}
    \mathcal{G}(\mathcal{H}) := \sup \lbrace G(\vert \varphi \rangle) \, : \, \vert \varphi \rangle \in \mathcal{H} \rbrace
\end{align}
which is characteristic for the space $\mathcal{H}$. Indeed, it can be shown that $\mathcal{G}(\mathcal{H})$ is directly related to the inradius of $B_{\mathcal{H}_{1}} \hat{\otimes} \cdots \hat{\otimes} B_{\mathcal{H}_{n}}$, where $B_{\mathcal{H}_{j}}$ denotes the unit ball in $\mathcal{H}_{j}$ and $\hat{\otimes}$ is the projective tensor product~\cite{aubrun2017}, which is for two closed convex sets $K_{1},K_{2}$ defined as $K_{1} \hat{\otimes} K_{2} = \text{conv} \lbrace x \otimes y \, \vert \, x \in K_{1}, y \in K_{2} \rbrace$. We now want to find upper (tight) bounds on $\mathcal{G} (\mathcal{H})$, bounding the maximal amount of entanglement that can be present in the system. However, similar to the question of maximal tensor rank, this turns out to be a hard question with only partial answers so far~\cite{bounding_maximal_entanglement_qi_2018}. A trivial upper bound can be obtained directly from the normalization of the state. Indeed, given an arbitrary state $\vert \psi \rangle \in (\mathbb{C}^{d})^{\otimes n}$ there must exist a tensor product of computational basis states $\vert j_{1} \cdots j_{n} \rangle$ such that $\vert \langle \psi \vert j_{1} \cdots j_{n} \rangle \vert^{2} \geq d^{-n}$. Consequently we have that $\mathcal{G}( (\mathbb{C}^{d})^{\otimes n}) \leq 1 - d^{-n}$. Although the derivation of this bound is trivial, it is difficult to obtain improved bounds. For instance, one could also consider upper bounds from a convex relaxation
\begin{align}
  &\underset{\vert \psi \rangle}{\text{inf}}  \, \, \underset{\vert \pi \rangle}{\text{sup}} \,  \vert \langle \psi \vert \pi \rangle \vert^{2} = \underset{\rho \in \mathcal{S}}{\text{inf}}  \, \, \underset{\sigma \in \text{SEP}}{\text{sup}} \,  \text{Tr}[\rho \sigma] 
  \nonumber \\ 
  &\geq \underset{\sigma \in \text{SEP}}{\text{sup}}  \underset{\rho \in \mathcal{S}}{\text{inf}} \, \text{Tr}[\rho \sigma] =  \underset{\sigma \in \text{SEP}}{\text{sup}}  \, \lambda_{\text{min}} (\sigma) = \frac{1}{d^{n}},
\end{align}
where $\mathcal{S}$ is the set of mixed states associated to $\mathcal{H}$ and $\text{SEP} \subset \mathcal{S}$ the set of separable states. The first equality is true due the convexity of the objective function and the fact, that pure states / product states are the extreme points of $\mathcal{S}$ / SEP. Further, for any bounded function $f(x,y)$ one has $\sup_{x} f(x,y) \geq f(x,y) \Rightarrow \inf_{y} \sup_{x} f(x,y) \geq \inf_{y} f(x,y)$ and thus we obtain $\inf_{y} \sup_{x} f(x,y) \geq \sup_{x} \inf_{y} f(x,y)$. The last equality is attained for the maximally mixed state. However, it should be noted that this coincides with the trivial upper bound. This bound is not tight, even not for two qubits.  

Another open problem regards the asymptotic scaling of the geometric measure of multi-qubit systems (see Problem 8.27 in \cite{aubrun2017}). Here the question is whether there exists a constant $C >0$ and for any $n \geq 1$ a quantum state $\vert \psi \rangle \in (\mathbb{C}^{2})^{\otimes n}$ such that $G(\vert \psi \rangle) \geq 1 - 2^{-n} C$. If $\vert \psi \rangle \in (\mathbb{C}^{2})^{\otimes n}$ is randomly chosen according to the Haar measure, with high probability it fulfills $G(\vert \psi \rangle) \geq 1 - 2^{-n}Cn\log(n)$, i.e., there is a parasitic factor $n \log(n)$~\cite{aubrun2017}. Note that this bound is an improved version of Eq.~\eqref{eq:bound_low_entanglement} if the number of particles is large. Although we cannot solve that problem here, we can put bounds on the constant $C$ under the assumption that the optimal states found by the algorithm are indeed the global optimizers of the geometric measure. If $G_{\text{max},n}$ denotes the maximal measure of a $n$ qubits system, we have $C_{n} \geq C_{\text{min},n} = 2^{n}(1-G_{\text{max},n})$. One can see in Tab.~\ref{tab:lower_bounds}  that the lower bound $C_{\text{min}}$ of the constant $C$ grows with the number of qubits. In particular, the increase of $C_{\text{max}}$ is not decreasing with the number of qubits $n$ as one would expect if there would exist a constant $C$ which is independent of $n$. Indeed, in this case one would have $C_{\text{min}} \leq \Tilde{C} < \infty$ for all $n$ and as optimal constant $C$ one can simply take the maximum (what exists by assumption) over all the $C_{\text{min}}$.

\begin{table}[t]
\centering
\begin{tabular}[t]{lcc} 
\hline
\hline
System & $G_{\text{max}}$ & $C_{\text{min}}$\\
\hline
$2$ qubits & $1/2$ & $2$\\
$3$ qubits & $5/9$ & $32/9 = 3.555$\\
$4$ qubits & $5/9$ & $32/9 = 3.555$\\
$5$ qubits & $(1/36)(33-\sqrt{3})$ & $(8/9)(3+ \sqrt{3}) \approx 4.206$\\
$6$ qubits & $16/3$ & $16/3= 5.333$\\
$7$ qubits & $0.941$ & $\approx 7.552$\\
$8$ qubits & $0.961$ & $\approx 9.984$\\
\hline
\hline
\end{tabular}
\caption{The maximal entanglement $G_{\text{max}}$ that can be present in a certain system according to the algorithm. This can be used to put lower bounds $C_{\text{min}}$ on the constant $C$ in the asymptotic scaling of the geometric measure in multi-qubit systems. For the cases where no analytical expression of the quantum state and thus of $G_{\text{max}}$ could be derived, i.e., $7$ and $8$ qubits, we have taken the value for $G_{\text{max}}$ for the state found by the algorithm with the largest geometric measure. For each case, the algorithm was run for $100$ random initial states while the geometric measure of the final states coincided with very small deviation.}
\label{tab:lower_bounds}
\end{table}%

Also for the special case of symmetric states an upper bound on the geometric measure can be derived~\cite{upper_bound_symm_measure_friedland}. 
Here it should be noted that the space of symmetric quantum states is much smaller then the space of all states. Indeed, while the state space for $n$ qubits has dimension $2^{n}$, the space of symmetric states only has dimension $n+1$. 
In particular, one can show that for a $n$ qudit system, the geometric measure is upper bounded by $G \leq 1 - 1/c$ where $c = \binom{n+d-1}{n}$, which follows from a similar normalization argument as the upper bound for generic states, but restricting to the Dicke basis. For $n$ qubits this yields the bound $G \leq 1 - 1/(n+1)$, which is at least for a small number of parties not tight. However, it is interesting to see that for $n \geq 5$, the maximal entangled state found by the algorithm violates that bound.

The maximal entangled state with respect to the geometric measure in a certain tensor space $\mathcal{H} = \otimes_{k} \mathcal{H}_{k}$ has also an interpretation from the viewpoint of normed vector spaces. First note that $\vert \vert \psi \vert \vert_{\sigma} = \text{sup}_{\vert \pi \rangle} \vert \langle \pi \vert \psi \rangle \vert$ is a norm on $\mathcal{H}$. Further, because $\mathcal{H}$ is finite dimensional, all norms on $\mathcal{H}$ are equivalent, i.e., if $\vert \vert \cdot \vert \vert $ and $\vert \vert \vert \cdot \vert \vert \vert$ are two norms on $\mathcal{H}$, then there exist constants $C_{1}, C_{2} >0$ such that $C_{1} \, \vert \vert \psi \vert \vert \leq \vert \vert \vert \psi \vert \vert \vert \leq C_{2} \, \vert \vert \psi \vert \vert$ for all (possible un-normalized) vectors $\psi \in \mathcal{H}$. In our case, the two norms are given by $\vert \vert \cdot \vert \vert_{\sigma}$ and $\vert \vert \cdot \vert \vert_{2}$. From this perspective, our algorithm is solving the problem
\begin{align}
   C := \text{min} \, \lbrace \frac{\vert \vert \psi \vert \vert_{\sigma}}{\vert \vert \psi \vert \vert_{2}}  \, : \, \psi \in \mathcal{H} \rbrace
\end{align}
Therefore, knowing $C>0$ implies that $\vert \vert \psi \vert \vert_{\sigma} \, \vert \vert \psi \vert \vert_{2}^{-1} \geq C$ for all $\psi$, thus $C\vert \vert \cdot \vert \vert_{2} \leq \vert \vert \psi \vert \vert_{\sigma}$. Because the algorithm yields the minimizing state explicitly the inequality is tight, hence the found constant $C$ is optimal.

\bibliography{bib.bib}

\end{document}